\newtheorem{defin}{Definition}
\newtheorem{theorem}{Theorem}
\newtheorem{lemma}[theorem]{Lemma}
\newtheorem{prop}[theorem]{Proposition}
\newtheorem{remark}[theorem]{Remark}
\title{Fast Two-Robot Disk Evacuation with Wireless Communication}
\author{
  Ioannis Lamprou, %\thanks{Corresponding Author, E-mail: \texttt{Ioannis.Lamprou@liverpool.ac.uk}},
  Russell Martin,
  Sven Schewe
  \\
  \texttt{\{Forename.Surname\}@liverpool.ac.uk}
  \\ \\
  Department of Computer Science,
  \\
  University of Liverpool, UK
  }
\begin{document}

\maketitle
\thispagestyle{empty}

\begin{abstract}
In the fast evacuation problem, we study the path planning problem for two robots who want to minimize the worst-case evacuation time on the unit disk.
The robots are initially placed at the center of the disk.
In order to evacuate, they need to reach an unknown point, the exit, on the boundary of the disk.
Once one of the robots finds the exit, it will instantaneously notify the other agent, who will make a beeline to it. 

The problem has been studied for robots with the same speed~\cite{s1}.  We study a more general case where one robot 
has speed $1$ and the other has speed $s \geq 1$.  We provide optimal evacuation strategies in the case that
$s \geq c_{2.75} \approx 2.75$ by showing matching upper and lower bounds on the worst-case evacuation time.  For 
$1\leq s < c_{2.75}$, we show (non-matching) upper and lower bounds on the evacuation time with a ratio less than $1.22$.  
Moreover, we demonstrate that a generalization 
of the two-robot search strategy from~\cite{s1} is outperformed by our proposed strategies for any $s \geq c_{1.71} \approx 1.71$.  

% The problem has been studied for robots with the same speed.
% We study the general case, where the robots might have different speed.
% We provide optimal bounds for the case where only one robot explores the boundary of the disk in search of the exit.
% This search strategy proves to be superior to the generalization of the optimal search strategy for two agents with similar speed when the factor between the different speeds exceeds $1.86$.
% We establish that this search technique is optimality when this factor exceeds $4.97$.
% For the smaller speed factors, we show that the generalization of the strategy for robots using this strategy is below $1.xyz$.

% The robots are capable of transmitting information to one another at any time via wireless communication.
% The task in question is to minimize the worst-case \emph{evacuation time}, i.e.\ the time until both robots find the exit.
% To do so, the robots need to adapt clever \emph{evacuation strategies}, which define their movements until the exit is discovered by either of them.
% A matching lower and upper bound is already known for the case $s = 1$ \cite{s1}.
% We consider a partition of the strategy space based on which robot(s) explore the disk boundary and provide lower and upper bounds for each subset of strategies.
% Overall, our evacuation strategies are proved to be optimal for $s \ge c$, where $c \approx 4.9699$, while we provide lower and upper bounds for $s \in (1, c)$. 

\end{abstract}

\newpage

\section{Introduction}
\setcounter{page}{1}
Consider a pair of mobile robots in an environment represented by a circular
disk of unit radius.  The goal of the robots is to find an {\em exit}, i.e.\ a point at an
unknown location on the boundary of the disk, and both move to this exit.  The exit
is only recognized when a robot visits it.  The robots' aim is to accomplish this task 
as quickly as possible. This problem is referred to as the 
{\em evacuation problem}.   The robots start at the center of the disk and can move
with a speed not exceeding their maximum velocity (which may be different from
one another).  They can coordinate their actions
in any manner they like, and can communicate wirelessly (instantaneously).    

\subsection{Related work}

Evacuation belongs to the realm of distributed search problems, which have 
a long history in mathematics, computer science, and operations 
research, see, e.g.~\cite{B64,B56,B63}.  

Salient features in search problems include the {\em environment} (e.g.\ a
geometric one or graph-based), {\em mobility} of the robots (how they are allowed
to move), {\em perception} of and {\em interaction} with the environment, and 
their {\em computational}
and {\em communication abilities}.  Typical tasks include exploring and mapping
an unknown environment, finding a (mobile or immobile) target (e.g.\ cops and
robbers games~\cite{Bonato} and pursuit-evasion games~\cite{Parsons}; 
the ``lost at sea'' problem~\cite{Gluss}; 
the cow-path problem and plane-searching 
problem~\cite{BCR88,BCR93,BDD, BS95, JL09, KRT, LC09, TF10}), 
rendezvous or gathering
of mobile agents~\cite{KDM,KKR}, and evacuation~\cite{CGGM,s1,face-to-face}.  
(Note that we distinguish between the distributed version of evacuation problems
involving a search for an unknown exit, and centralized versions, typically modeled 
as (dynamic) capacitated flow problems on graphs, where the exit is known.)  
A general survey of search and rendezvous problems can be found in~\cite{alpern02b}.
Also related is the task of patrolling or monitoring, i.e.\ the
periodic (re)visitation of (part of) the environment~\cite{YC,czyz2011,YWB}.  

In most all of these settings, the typical cost is the time required to finish the task
(in a synchronous environment), or the total distance moved by the robots to 
finish it (in an asynchronous setting).  (Patrolling has a different ``cost'', that being
the time between consecutive visits to any point in the region, the so-called 
``idle time''.)

% While the main task of mobile agents may be different from exploration, it is
% often necessary for them to devote some of their activity to recognizing or mapping
% their environment, so search problems are a natural ``pre-processing'' step 
% in their activities.  Much literature has been devoted to studying distributed mechanisms
% for mobile agents, to determine feasibility of search, memory or computational
% power required for the task, consideration of different communication abilities (e.g.\
% wireless, ``whiteboards'' to leave messages), and the use of 
% external markers or ``pebbles'' to aid in the distributed process~\cite{Bender}.  

A little explored feature of the robots is their {\em speed}.  Most past work has
focused on the case where the robots all share the same (maximal) speed.  Notable
exceptions of which the authors are aware include~\cite{CGGM} which considers 
the evacuation problem on the infinite line with robots with distinct maximal speeds, 
and \cite{czyz2011} where the
authors show a non-intuitive ring patrolling strategy using three
robots with distinct maximal speeds.  It is this feature, robots with 
different maximal speeds, that we explore in this paper.  

The most relevant previous work is~\cite{s1,face-to-face}, which explores the evacuation
problem in the unit disk with two robots with identical speeds ($s=1$).  

\subsection{Our results}

We consider the evacuation problem in the unit disk using two robots with distinct maximal speeds
(one with speed $1$, the second with speed $s \geq 1$).
% and where the robots 
% use wireless communication.  The robots share a common clock, so can 
The robots share a common clock and can communicate instantaneously when they have found the exit
(wireless communication) and so can synchronize their behavior
in the evacuation procedure.
We assume that the robots can measure distances to an arbitrary precision
(equivalently, they can measure time to an arbitrary precision), and can vary
their speeds as they desire, up to their maximum speed. 

We show that even in the case of two robots, the analysis involved in finding
(time) optimal evacuation strategies can become intricate, 
with strategies that depend on the
(ratio of) the fast robot's maximal speed.
For large $s$, we introduce an efficient search strategy, called the 
Half-Chord Strategy (Figure~\ref{fig:chord}).
We 
generalize a strategy from~\cite{s1} for small $s$, the ``Both-to-the-Same-Point Strategy'' (BSP), where
the two robots move to the same point on the boundary and then separately explore the boundary
in clockwise and counterclockwise directions to find the exit (Figure~\ref{fig:bsp}). For values of
$s \geq c_{1.86}$ (with $c_{1.86} \approx 1.856$), we show BSP is not optimal by 
demonstrating that the Half-Chord Strategy is superior to it.  
Moreover, we improve on this with the Fast-Chord Strategy (Figure~\ref{fig:fast-chord}), which outperforms Half-Chord for $1.71 \approx c_{1.71} < s < c_{2.07} \approx 2.07$.
We obtain optimality
for all $s \geq c_{2.75} \approx 2.75$, in the wireless setting, 
as we demonstrate matching upper
and lower bounds on the evacuation time. 
On the other hand, for $s \in (1, c_{2.75})$ we provide
lower bounds on the evacuation time that do not match the bounds provided by the 
respective search strategies (BSP for $s < c_{1.71}$, Fast-Chord for $s \in [c_{1.71}, c_{2.07})$ and Half-Chord for $s \geq c_{2.07}$).

Section~\ref{sec:definition} contains a more formal definition of the problem
we consider.  Section~\ref{sec:upper} contains our upper bounds on the
evacuation time, while Section~\ref{sec:lower} has our lower bounds.  
In the interests of space, parts of the proofs are omitted from this version, and
we trust the reader to rely upon the supplied diagrams for the intuition of
our results.  

\section{Problem Definition and Strategy Space}\label{sec:definition}

In this section, we formally define the problem in question.
Furthermore, we provide a partition of the strategy space and some observations, which will be useful in the bounds to follow.

\begin{defin}[The Fast Evacuation Problem]
 Given a unit disk and two robots atarting at its center (the former with maximum speed $s \ge 1$ and the latter with maximum speed $1$),
 provide an algorithm such that \emph{both} robots reach an unknown exit lying on a boundary point of the disk.
 The two robots, namely \emph{Fast} and \emph{Slow}, are allowed to move within the entire unit disk, can only identify the exit when they stand on it,
 and can communicate wirelessly at any time.
\end{defin}

\begin{defin}
 An ``evacuation strategy'' is an algorithm on how each robot moves such that both robots have evacuated the disk at the end of its execution.
\end{defin}

The following remark is a direct consequence of the geometric environment in which this fast evacuation scenario takes place.

\begin{remark}
In any evacuation strategy, when either robot discovers the exit, the optimal strategy of the other one immediately reduces to following a beeline to the exit. 
\end{remark}

We now proceed with identifying key aspects of potential strategies.

\begin{defin}
 A ``both-explore'' strategy is a strategy for both robots to evacuate the disk, where (in the worst-case) both of them explore at least two distinct points on the boundary.
 We define the set of all both-explore strategies as $BES$.
\end{defin}

\begin{comment} CHECK THIS OUT with FES-like LB
Notice that for the set of $BES$-strategies, we are going to consider only values of $s < 2\pi+1$.
Otherwise, the fast robot can traverse the whole boundary before the slow one even touches it.
\end{comment}

\begin{defin}
 A ``fast-explores'' strategy is a strategy where \emph{only Fast} explores the boundary searching for the exit.
 Slow, eventually, only reaches the exit point and at any time it reaches no other point on the boundary of the disk.
 We define the set of all fast-explores strategies as $FES$.
\end{defin}

\begin{defin}
 A ``slow-explores'' strategy is a strategy where \emph{only Slow} explores the boundary searching for the exit.
 Fast, eventually, only reaches the exit point and at any time it reaches no other point on the boundary of the disk.
 We define the set of all fast-explores strategies as $SES$.
\end{defin}

Notice that for $s=1$, if only one robot explores the boundary, we randomly assign such a 
strategy to $FES$ or $SES$.

% Moreover, one can easily observe that an optimal $SES$ strategy for is one where Fast follows 
% slow up to an infinitesimally small distance always staying within the disk interior.
% Such a strategy yields an $\lim_{\epsilon\rightarrow 0}(1 + 2\pi - \epsilon) = 1 + 2\pi$ upper 
% bound, which is optimal given that Slow needs at least $1 + 2\pi$ time to explore the whole boundary.

Below, let $ALL$ stand for the set of all evacuating strategies.

\begin{prop}
 $(BES, FES, SES)$ forms a partition of $ALL$.
\end{prop}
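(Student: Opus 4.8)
The plan is to verify the two defining requirements of a partition: pairwise disjointness of $BES$, $FES$, $SES$, and that their union exhausts $ALL$. I would base the whole argument on one structural feature of a strategy, namely whether each robot acts as a \emph{searcher} of the boundary. Call a robot a searcher in a given strategy if, in a worst-case execution, it visits at least two distinct boundary points (equivalently, it does genuine boundary exploration rather than merely beelining to an already-discovered exit, which contributes only the single exit point). Writing the searcher-status of the two robots as a pair of Boolean flags, each of the three sets is pinned down by this pair.

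For disjointness, I would read the flags off the definitions. A $BES$-strategy is, by definition, one in which both robots explore at least two boundary points, i.e.\ both flags are true. In an $FES$-strategy Slow reaches no boundary point other than the exit, so Slow is a non-searcher, while Fast is the sole explorer, so its flag is true; symmetrically, an $SES$-strategy has Fast a non-searcher and Slow a searcher. The three flag-patterns $(\text{true},\text{true})$, $(\text{true},\text{false})$ and $(\text{false},\text{true})$ are mutually exclusive, so no strategy lies in two of the sets.

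The union property reduces to ruling out the only remaining pattern, $(\text{false},\text{false})$, in which neither robot searches the boundary; this is the step I expect to be the crux. I would argue it by an adversary/covering argument built on the rule that a robot recognizes the exit only while standing on it. If neither robot ever visits more than one boundary point, then across the execution the two robots together touch only finitely many boundary points, whereas the set of possible exit locations is the entire (continuum) boundary circle. An adversary can therefore place the exit at a point no robot ever visits, and it is never discovered, so the strategy fails to evacuate and does not belong to $ALL$. Hence at least one robot must be a searcher, every strategy in $ALL$ realizes one of the three admissible patterns, and $BES \cup FES \cup SES = ALL$.

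It remains to address the degenerate case noted after the definitions: when $s = 1$ the two robots are indistinguishable, so a strategy in which exactly one robot searches matches the descriptions of both $FES$ and $SES$. Here disjointness is not forced by the flags alone, and I would invoke the stated convention of assigning each such single-searcher strategy arbitrarily to exactly one of $FES$, $SES$; this restores disjointness while leaving the covering argument untouched, completing the proof that $(BES, FES, SES)$ is a partition of $ALL$.
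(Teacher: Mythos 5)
Your proposal is correct and follows essentially the same route as the paper's own (much terser) proof: disjointness because the three sets correspond to mutually exclusive patterns of which robot explores the boundary, and coverage because in any valid strategy at least one robot must explore. Your adversary argument ruling out the ``neither searches'' pattern and your explicit handling of the $s=1$ convention simply supply details the paper asserts without elaboration.
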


\begin{proof}
 $BES \cap FES \cap SES = \emptyset$, since only Fast or only Slow or both explore the boundary.
 $ALL = BES \cup FES \cup SES$, since for any possible strategy at least one robot explores the boundary.
\end{proof}

We remark that, when considering $SES$ and $FES$ strategies, it can become a burden to forcefully 
keep the non-exploring robot away from the boundary.
E.g.\ if we only want Slow  to explore in an $SES$ strategy, the optimal behavior of Fast 
would be to mimic the behavior of Slow.
For $FES$ strategies with $s \leq 2$, it also proves to be most natural to allow Slow to move 
on the boundary, but to ignore it when Slow finds the exit first. 
For this reason we use $FES$ and $SES$ strategies in this sense.  
Alternatively, one could also let the non-exploring robot to move $\varepsilon$-close to the boundary. 

We do not consider $SES$ strategies in our analysis. An optimal $SES$ strategy is obviously to go to the boundary and explore the boundary clockwise or counterclockwise.
The worst case time is $1+2\pi$.

% when $s >1$, an optimal SES strategy can be replaced by a BES strategy by 
% having Fast mimic the behavior of Slow.  

\section{Upper Bounds}\label{sec:upper}

%We provide a $BES$, $FES$ and $SES$ strategy.
%SOMETHING MORE HERE

\subsection{The Half-Chord Strategy}
The idea for this strategy stems from the proof of the $FES$ lower bound to follow.
The worst-case analysis is performed for $s \in [2, \infty)$.
For the strategy details below, please refer to \autoref{fig:chord}.
Fast's trajectory is given in blue, while Slow's in red.
All arcs are considered in counterclockwise order.
\paragraph{\bf The Strategy.}

\emph{Fast} moves as follows until the exit is found:
\begin{itemize}
 \item for $t \in \left[0, \frac{1}{s}\right]$: moves toward $B$ and
 \item for $t \in \left(\frac{1}{s}, \frac{1+2\pi}{s}\right]$: traverses the boundary 
counterclockwise.
\end{itemize}
\emph{Slow} moves as follows until the exit is found:
\begin{itemize}
 \item \underline{Phase I}: for $t \in \left[0, \frac{2}{s}\right]$ moves toward $C$,
 \item \underline{Phase II}: for $t \in \left[\frac{2}{s}, \frac{1+2\arccos(-2/s)}{s}\right]$ moves toward $M$ via $\wideparen{CM}$ on disk $\left(O, \frac{2}{s}\right)$,
 \item \underline{Phase III}: for $t \in \left[\frac{1+2\arccos(-2/s)}{s}, \frac{1+2\pi}{s}\right]$ moves toward $B$ via the \emph{MB} segment.
\end{itemize}

In \autoref{tab:halfchord}, we shortly outline some core measurements on the emerging shape, e.g.\ angles and lengths, which will be useful in the proofs to follow.
We now continue with some useful propositions.

\begin{table}
\begin{center}
\begin{tabular}{|l||l|}
  \hline \hline
  $|OC| = \frac{2}{s}$ & by choice \\ \hline 
  $\wideparen{BA} = 2\arccos\left(-\frac{2}{s}\right)$ & by choice \\ \hline
  $\phi = \measuredangle BOC = \pi + 1/2$ & by choice \\ \hline
  $|\wideparen{CM}| = \frac{1}{s}(2\arccos\left(-\frac{2}{s}\right) - 1)$ & slow  on $M$ exactly when fast on $A$ \\ \hline
  $\theta = \measuredangle COM = \frac{s}{2}|\wideparen{CM}| = \arccos\left(-\frac{2}{s}\right) - 1/2$ & arc-to-angle \\ \hline
  $ \psi = \measuredangle MOB = 2\pi - \phi - \theta = \pi - \arccos\left(-\frac{2}{s}\right)$ & sum of angles around  $O$ \\ \hline
  $|AB| = 2\sin\left(2\arccos\left(-\frac{2}{s}\right)/2\right) = 2\sqrt{1 - \frac{4}{s^2}}$ & arc-to-chord computation \\ \hline 
  $|AM| = |MB| = |AB|/2 = \sqrt{1 - \frac{4}{s^2}}$ & since $M$ is the middle of the chord \\ \hline
  $\measuredangle OMB = \pi/2$ & perpendicular bisector through center \\  \hline
\end{tabular}
\end{center}
\caption{Measurements for Half-Chord Strategy}
\label{tab:halfchord}
\end{table}

 \begin{comment}
 \item $|OC| = \frac{2}{s}$ (by choice)
 \item $\wideparen{BA} = 2\arccos(-2/s)$ (by choice - in counterclockwise fashion)
 \item $\phi = \measuredangle BOC = \pi + 1/2$ (by choice)
 \item $|\wideparen{CM}| = \frac{1}{s}(2\arccos(-2/s) - 1)$ (such that slow arrives on $M$ exactly when fast arrives on $A$)
 \item $\theta = \measuredangle COM = \frac{s}{2}|\wideparen{CM}| = \arccos(-2/s) - 1/2$
 \item $ \psi = \measuredangle MOB = 2\pi - \measuredangle BOC -\measuredangle COM = \pi - \arccos(-2/s)$
 \item $|AB| = 2\sin(2\arccos(-2/s)/2) = 2\sin(\arccos(-2/s)) = 2\sqrt{1 - \frac{4}{s^2}}$ \\ (from arc-to-chord computation and the fact that $\sin(\arccos(x)) = \sqrt{1 - x^2}$ for any $x \ge 0$)
 \item $|AM| = |MB| = |AB|/2 = \sqrt{1 - \frac{4}{s^2}}$, since $M$ is the middle of the chord
 \item $\measuredangle OMB = \pi/2$, since the perpendicular bisector of any chord passes though the center (reference???)
\end{comment}

\begin{prop}
  Fast reaches $A$ exactly when Slow reaches $M$.
\end{prop}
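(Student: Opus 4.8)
The plan is to treat this as a pure timing verification: I would compute the instant at which Fast arrives at $A$ and the instant at which Slow arrives at $M$, and then check that the two coincide. In fact the equality is essentially built into the strategy, since the arc length $|\wideparen{CM}|$ listed in \autoref{tab:halfchord} was chosen precisely so that the two robots synchronize at this moment; the proof therefore amounts to confirming that the arithmetic of the table is self-consistent once the two different speeds are taken into account.

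First I would track Fast. During $[0,1/s]$ it travels the radius $OB$ of length $|OB| = 1$ at speed $s$, so it reaches $B$ at time $1/s$. It then follows the boundary counterclockwise through the arc $\wideparen{BA} = 2\arccos(-2/s)$, again at speed $s$, which takes an additional $\frac{2\arccos(-2/s)}{s}$ units of time. Hence Fast arrives at $A$ at time
\[
t_A = \frac{1}{s} + \frac{2\arccos(-2/s)}{s} = \frac{1 + 2\arccos(-2/s)}{s}.
\]

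Next I would track Slow, which moves at speed $1$. In Phase~I it covers the segment $OC$ of length $|OC| = 2/s$ at speed $1$, reaching $C$ at time $2/s$. In Phase~II it follows the arc $\wideparen{CM}$ on the circle of radius $2/s$ centered at $O$; here I would use that $M$ is the midpoint of the chord $AB$, so that $\measuredangle OMB = \pi/2$ and $|OM|^2 = |OB|^2 - |MB|^2 = 1 - (1 - 4/s^2) = 4/s^2$, confirming that $M$ indeed lies on that circle and that the listed arc length $|\wideparen{CM}| = \frac{1}{s}\bigl(2\arccos(-2/s) - 1\bigr)$ is exactly the distance Slow must travel. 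Since Slow moves at speed $1$, this leg takes time $|\wideparen{CM}|$, so Slow arrives at $M$ at time
\[
t_M = \frac{2}{s} + \frac{1}{s}\bigl(2\arccos(-2/s) - 1\bigr) = \frac{1 + 2\arccos(-2/s)}{s} = t_A.
\]

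I do not expect any genuine obstacle here beyond careful bookkeeping: the only points to get right are that Fast moves at speed $s$ while Slow moves at speed $1$, and that along a circle of radius $2/s$ the arc length equals the radius times the central angle, so that the timings of the two otherwise independent trajectories match exactly.
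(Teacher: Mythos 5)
Your proof is correct and follows essentially the same route as the paper's: a direct timing computation showing Fast reaches $A$ at time $\frac{1+2\arccos(-2/s)}{s}$ and Slow reaches $M$ at the same time via the segment $OC$ and the arc $\wideparen{CM}$. Your extra check that $|OM| = 2/s$ (so that $M$ really lies on the circle Slow traverses) is a nice bit of diligence the paper leaves implicit in its table, but it does not change the argument.
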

\begin{proof}
 Fast reaches $A$ after $\frac{1+ 2\arccos(-2/s)}{s}$ time, 
 since it takes $\frac{1}{s}$ time for it to traverse \emph{OB} and $\frac{2\arccos(-2/s)}{s}$ time to traverse $\wideparen{BA}$.
 Slow reaches $C$ after time $\frac{2}{s}$.
 Then, it traverses $\wideparen{CM}$ for another $\frac{1}{s}(2\arccos(-2/s) - 1)$ time for a total of $\frac{1+ 2\arccos(-2/s)}{s}$.
\end{proof}

\begin{prop}
  Fast explores the whole boundary before Slow reaches $B$.
\end{prop}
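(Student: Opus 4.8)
The plan is to write down the two times being compared and reduce their comparison to the elementary inequality $\tan\theta \ge \theta$. First I would compute when Fast completes the boundary: it spends $\frac{1}{s}$ reaching $B$ along $OB$ and then $\frac{2\pi}{s}$ sweeping the whole circumference, so the entire boundary has been explored at time $T_F = \frac{1+2\pi}{s}$. Next I would locate Slow, which moves at unit speed throughout. By the preceding proposition Slow is at $M$ at time $\frac{1+2\arccos(-2/s)}{s}$, and it then traverses the segment $MB$, of length $|MB| = \sqrt{1-4/s^2}$ (see \autoref{tab:halfchord}); hence Slow reaches $B$ at time $T_S = \frac{1+2\arccos(-2/s)}{s} + \sqrt{1-4/s^2}$.

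The claim is exactly $T_F \le T_S$. Cancelling the common $\frac{1}{s}$ and rearranging, this is equivalent to
\[
 \frac{2\pi - 2\arccos(-2/s)}{s} \le \sqrt{1 - \frac{4}{s^2}}.
\]
To establish it for all $s \ge 2$ I would substitute $x = 2/s \in (0,1]$ and apply the identity $\pi - \arccos(-x) = \arccos(x)$, which collapses the left-hand side to $x\arccos(x)$ and turns the inequality into $x\arccos(x) \le \sqrt{1-x^2}$. Setting $x = \cos\theta$ with $\theta = \arccos(x) \in [0,\pi/2]$, so that $\sqrt{1-x^2} = \sin\theta$, this reads $\theta\cos\theta \le \sin\theta$, i.e.\ $\theta \le \tan\theta$ on $[0,\pi/2)$ — the standard inequality, which follows since $\tan\theta - \theta$ vanishes at $0$ and has derivative $\sec^2\theta - 1 = \tan^2\theta \ge 0$.

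I expect the only delicate point to be bookkeeping at the endpoint rather than any hard estimate. At $s = 2$ we have $x = 1$, $\theta = 0$, and both sides vanish (indeed $M = B$ there, since $|MB| = \sqrt{1-4/s^2} = 0$), so Fast finishes exactly as Slow reaches $B$; for every $s > 2$ we have $\theta \in (0,\pi/2)$, where $\tan\theta > \theta$ strictly, giving the claim with room to spare. Thus the substance of the argument is the clean reduction to $\tan\theta \ge \theta$, and care is needed mainly to phrase the degenerate case $s = 2$.
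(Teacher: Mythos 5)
Your proposal is correct and follows the same route as the paper: both arguments compare Fast's boundary-completion time with Slow's arrival time at $B$, and both reduce the claim to the single inequality $\frac{2\pi-2\arccos(-2/s)}{s} \le \sqrt{1-\frac{4}{s^2}}$ for $s \ge 2$. The difference is in what happens next: the paper simply asserts that this inequality holds (``it's adequate to see that\dots'') and offers no justification, whereas you actually prove it --- the substitution $x = 2/s$ together with the identity $\pi - \arccos(-x) = \arccos(x)$ collapses the left side to $x\arccos(x)$, and then writing $x = \cos\theta$ reduces everything to the standard fact $\tan\theta \ge \theta$ on $[0,\pi/2)$, with equality precisely at the degenerate endpoint $s=2$ where $M$ and $B$ coincide. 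So your write-up is not a different decomposition but a completion: it supplies the analytic step the paper leaves implicit, and it correctly flags that at $s=2$ the claim holds only with equality (Fast finishes exactly when Slow reaches $B$), which is consistent with the strategy's worst-case bound there.
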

\begin{proof}
 Slow reaches $M$ after $\frac{1+ 2\arccos(-2/s)}{s}$ time and then has to traverse $MB$ for another $\sqrt{1-\frac{4}{s^2}}$.
 Meanwhile, after $\frac{1+ 2\arccos(-2/s)}{s}$ time, Fast lies on $A$ and then has to traverse $\wideparen{AB}$ for another $\frac{2\pi-2\arccos(-2/s)}{s}$.
 It's adequate to see that $\sqrt{1-\frac{4}{s^2}} \ge \frac{2\pi-2\arccos(-2/s)}{s}$ for any $s \ge 2$.
% Consider $f(s) = \sqrt{1-\frac{4}{s^2}} - \frac{2\pi-2\arccos(-2/s)}{s} = \sqrt{1-\frac{4}{s^2}} - \frac{2\arccos(2/s)}{s}$.
% It suffices to notice that $f(2) = 0$ and that $\frac{df}{ds} = \frac{2\arccos(2/s)}{s^2} \ge 0$ for any $s \ge 2$.
\end{proof}

The aforementioned proposition, together with the fact that it takes $\frac{1+2\pi}{s}$ time for Fast to explore the whole boundary, provides us with the endtime for Phase III and the strategy in general.

The main result of this section follows from the combination of the upper bounds proved for Phase I, II, and III in the following subsections.

\begin{theorem}
\label{thm:fes_ub} For any $s \ge 2$, the worst-case evacuation time of the Half-Chord strategy is at most
 $\frac{1+2\arccos\left(-\frac{2}{s}\right)}{s} + \sqrt{1 - \frac{4}{s^2}}$.
\end{theorem}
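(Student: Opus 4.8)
The plan is to read Half-Chord as a strategy in which only \emph{Fast} ever discovers the exit, so that by the Remark the evacuation time for a \emph{fixed} exit $X$ equals the instant Fast reaches $X$ plus the straight-line travel time \emph{Slow} needs from wherever it happens to be at that instant. I would parametrize $X$ by its counterclockwise arc-angle $\alpha \in [0,2\pi)$ from $B$; since Fast spends time $1/s$ reaching $B$ and then sweeps the boundary at speed $s$, it reaches $X$ at time $t_F(\alpha) = (1+\alpha)/s$, and the evacuation time is
\[
E(\alpha) = \frac{1+\alpha}{s} + \bigl\lvert \mathrm{Slow}\bigl(t_F(\alpha)\bigr) - X(\alpha)\bigr\rvert .
\]
The theorem then asks for $\max_{\alpha} E(\alpha) \le T^{*}$, where $T^{*} := \frac{1+2\arccos(-2/s)}{s} + \sqrt{1-4/s^{2}}$ is exactly the time at which Slow, following $O\to C\to M\to B$, arrives at $B$.

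I would split the range of $\alpha$ according to the phase Slow occupies at the discovery instant: Phase~I for $\alpha \in [0,1]$ (Slow on segment $OC$), Phase~II for $\alpha \in [1, 2\arccos(-2/s)]$ (Slow on the arc $\wideparen{CM}$ of radius $2/s$), and Phase~III for $\alpha \in [2\arccos(-2/s), 2\pi]$ (Slow on the chord segment $MB$). These three intervals cover $[0,2\pi)$, and the theorem reduces to the three per-phase upper bounds proved in the subsections that follow; the claimed $T^{*}$ is their maximum. The decisive equality sits at the Phase~II/III boundary: the exit is then $A$, and by the first proposition Slow is exactly at $M$, so its beeline is $MA$ of length $\lvert MA\rvert = \sqrt{1-4/s^{2}}$ and $E = \frac{1+2\arccos(-2/s)}{s} + \sqrt{1-4/s^{2}} = T^{*}$. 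Thus $T^{*}$ is attained at $X=A$, and the content of the statement is that no other exit does strictly worse.

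For the per-phase bounds I would compute $\lvert \mathrm{Slow}(t_F) - X\rvert$ by the law of cosines, reading off the radius $\lvert OC\rvert = 2/s$ and the angles $\phi=\pi+\tfrac12$, $\theta$, $\psi$ from Table~\ref{tab:halfchord}, and then show the resulting $E(\alpha)$ never exceeds $T^{*}$. For Phases~I and~II I expect $E$ to be maximized at $X=A$, so the work is to show $E(\alpha)\le T^{*}$ on the whole near arc rather than only at its endpoints. For Phase~III the relevant inequality is $\lvert \mathrm{Slow}(t_F(\alpha)) - X(\alpha)\rvert \le T^{*} - t_F(\alpha)$, i.e.\ Slow reaches $X$ no later than it would reach $B$; here the second proposition plays a structural role, guaranteeing via $\sqrt{1-4/s^{2}} \ge (2\pi - 2\arccos(-2/s))/s$ that Slow is still en route on $MB$ throughout Fast's sweep of $\wideparen{AB}$, so the Phase~III parametrization is valid and both endpoints $A$ and $B$ meet the bound with equality.

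I expect the main obstacle to be Phase~II: there Slow moves along a circular arc while $X$ simultaneously sweeps an arc of the unit circle, so $E(\alpha)$ is a sum of two $\alpha$-dependent trigonometric terms and proving $E(\alpha)\le T^{*}$ on the full interval demands a genuine monotonicity or convexity analysis rather than a one-line estimate. The particular offsets in the construction --- the choice $\phi=\pi+\tfrac12$ and the length of $\wideparen{CM}$ forcing Slow onto $M$ exactly as Fast reaches $A$ --- are precisely what make the bound tight at $X=A$ and slack in the interior, so the delicate point is to confirm that this tuning leaves no intermediate exit beating $T^{*}$.
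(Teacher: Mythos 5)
Your setup mirrors the paper's proof exactly: the same parametrization of the exit by Fast's arrival time, the same three-phase split according to where Slow is at the discovery instant (segment $OC$, arc $\wideparen{CM}$, segment $MB$), the same identification that the bound $T^*$ is attained with equality at $X=A$ (and at $X=B$), and the same structural use of the two propositions. But what you have written is a skeleton, not a proof: the three per-phase inequalities $E(\alpha)\le T^*$ \emph{are} the theorem --- in the paper they are Lemmas~\ref{thm:I}, \ref{thm:II} and \ref{thm:III} --- and your proposal proves none of them. You explicitly defer Phase~I to ``show the resulting $E(\alpha)$ never exceeds $T^*$,'' you state the Phase~III inequality without an argument, and for Phase~II you only announce that ``a genuine monotonicity or convexity analysis'' will be needed, which is precisely the step you do not carry out. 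Flagging the obstacle is not the same as overcoming it.

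It is also worth noting that the analytic route you anticipate for Phases~II and~III is not how the bound is actually obtained, and avoiding it is the main idea you are missing. For Phase~II the paper sidesteps all calculus with a congruence trick: if the exit $E$ is found $\tau$ time before Slow reaches $M$, let Slow (conceptually) backtrack distance $\tau$ along the arc of $(O,\tfrac{2}{s})$ to a point $D$; then $\measuredangle EOD=\measuredangle AOM$, so triangles $\triangle EOD$ and $\triangle AOM$ are congruent, giving $|DE|=|AM|=\sqrt{1-4/s^2}$, and by the triangle inequality Slow's beeline from its actual position is at most $\tau+\sqrt{1-4/s^2}$ --- exactly the remaining budget, with no monotonicity claim needed (indeed the bound is proved by exhibiting a dominating path, not by showing $E(\alpha)$ is increasing). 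For Phase~III the needed inequality ``Slow reaches $X$ no later than it would reach $B$'' is proved by an auxiliary circle centered at Slow's current position $C$ with radius equal to the remaining budget $\sqrt{1-4/s^2}-\tau$: this circle cuts $(O,1)$ at $B$ and at a second point $D$ on $\wideparen{AB}$, and the exit is shown to lie on $\wideparen{DB}$ via the computation $\measuredangle AOD=2\arctan(s\tau/2)\le s\tau=\measuredangle AOE$. Without these (or equivalent) arguments, the claim that ``no other exit does strictly worse'' remains exactly that --- a claim.
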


% ---------------------- THE STRATEGY FIGURE --------------------------
\begin{figure} % example for s = 4
\includegraphics{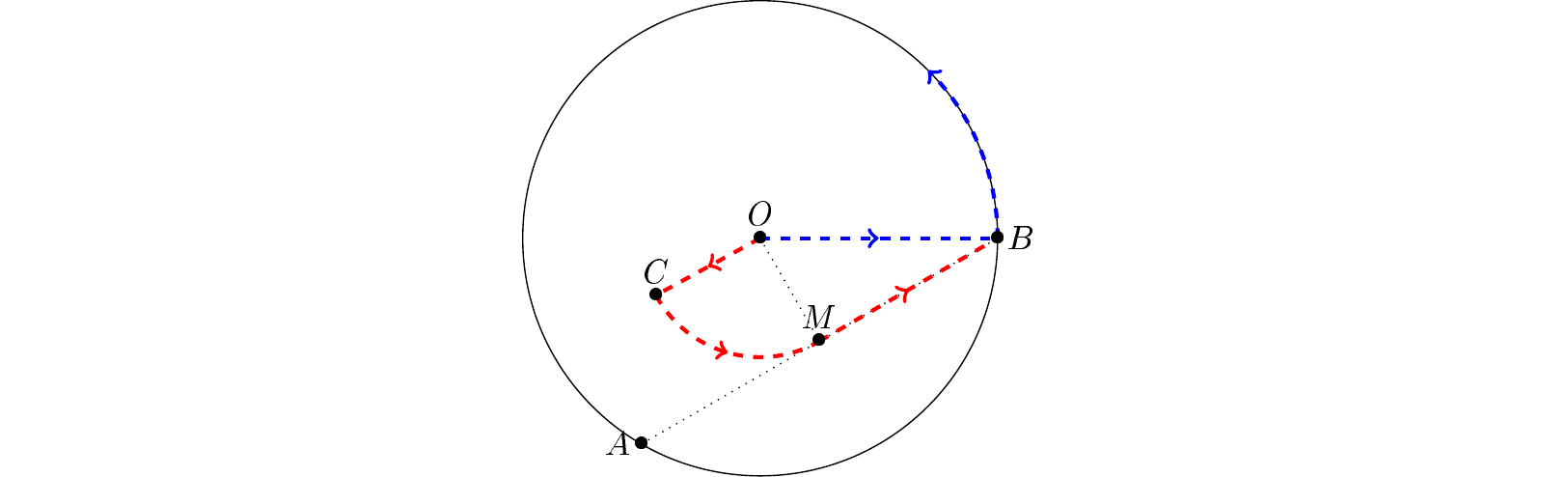}
\caption{The Half-Chord Strategy (Example for $s = 4$)}
\label{fig:chord}
\end{figure}

% ---------------------------------------------------------------------

\subsubsection{Phase I}
\begin{lemma}\label{thm:I}
 The Half-Chord evacuation strategy takes at most 
 $\frac{1+2\arccos\left(-\frac{2}{s}\right)}{s} + \sqrt{1 - \frac{4}{s^2}}$
 evacuation time, if the exit is found during Phase I.
\end{lemma}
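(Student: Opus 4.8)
The plan is to observe that during Phase~I only \emph{Fast} can discover the exit, since \emph{Slow} is strictly interior to the disk (it is travelling along the segment $OC$). Hence the exit is found by \emph{Fast} at some time $t_0 \in \left(\tfrac1s, \tfrac2s\right]$, and at that instant it lies on the boundary at arc-distance $st_0 - 1 \in (0,1]$ counterclockwise from $B$. Writing $S$ for \emph{Slow}'s position at time $t_0$ (the point of $OC$ with $|OS| = t_0$, since \emph{Slow} has unit speed) and $X$ for the exit, the worst-case evacuation time for this exit is $f(t_0) := t_0 + |SX|$, because by the Remark \emph{Slow} then makes a beeline to $X$. The goal is to prove $f(t_0) \le T^{\ast}$ for every such $t_0$, where $T^{\ast} = \frac{1 + 2\arccos(-2/s)}{s} + \sqrt{1 - 4/s^2}$; it is worth noting that $T^{\ast}$ is precisely the time at which \emph{Slow} would reach $B$ at the end of Phase~III, so the lemma says that an exit found in Phase~I is always reached no later than $B$ would have been.

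First I would fix coordinates with $O$ at the origin, $B$ at angle $0$, and counterclockwise meaning increasing angle. Using the entries of \autoref{tab:halfchord} (with $\alpha := \arccos(-2/s)$), this puts $M$ on the circle of radius $2/s$ at angle $\alpha - \pi$ and $C$ on the same circle at angle $\tfrac12 - \pi$, while $X$ sits on the unit circle at angle $st_0 - 1$. Since $S$ lies on the ray $OC$, the angle $\measuredangle XOS = \measuredangle XOC$ is the affine function $\gamma(t_0) = st_0 - \tfrac32 + \pi$, so the law of cosines in triangle $OSX$ (using $\cos\gamma = -\cos(st_0-\tfrac32)$) gives the closed form
\[
 f(t_0) = t_0 + \sqrt{\,t_0^2 + 1 + 2t_0\cos\!\big(st_0 - \tfrac32\big)\,}.
\]
This reduces the lemma to a one-variable inequality on $\left[\tfrac1s, \tfrac2s\right]$.

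Next I would show $f$ is non-decreasing on this interval, so that its maximum is attained at the right endpoint $t_0 = \tfrac2s$, where $S = C$. Writing $g$ for the radicand, one has $g'(t_0) = 2t_0 + 2\cos(st_0 - \tfrac32) - 2st_0\sin(st_0 - \tfrac32)$, and the point is that the additive $+1$ from the leading $t_0$ term dominates, i.e.\ $2\sqrt{g(t_0)} > -g'(t_0)$ (equivalently $4g > g'^2$ wherever $g' < 0$), giving $f'(t_0) > 0$ throughout. It then remains to verify the endpoint inequality
\[
 \frac2s + \sqrt{\frac{4}{s^2} + 1 + \frac4s\cos\tfrac12} \;\le\; \frac{1 + 2\arccos(-2/s)}{s} + \sqrt{1 - \frac{4}{s^2}}
\]
for all $s \ge 2$, which is exactly $|CX| \le |\wideparen{CM}| + |MB|$: the straight hop from $C$ to the exit is no longer than \emph{Slow}'s remaining Phase~II--III detour to $B$.

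The main obstacle is this last, transcendental inequality. As $s \to \infty$ both sides tend to $1$, so the bound is asymptotically tight and a crude estimate will not close it. I expect to handle it by a first-order expansion in $1/s$, where the comparison reduces on the linear term to the clean fact $\pi > 1 + 2\cos\tfrac12$, combined with a monotonicity argument (showing the difference of the two sides is positive and decreasing to $0$), using $\arccos(-2/s) = \tfrac\pi2 + \arcsin(2/s)$ to control the $\arccos$ term. Making the monotonicity of $f$ fully rigorous over the whole range $s \ge 2$ (the sign of $f'$ where $g'<0$) is the other place that needs care, though it is routine once phrased as $4g > g'^2$.
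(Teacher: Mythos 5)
Your reduction is exactly the paper's: only Fast can discover the exit in Phase~I, the discovery time is parametrized (the paper writes it as $\frac{1+a}{s}$ with $a\in[0,1]$, so $t_0=\frac{1+a}{s}$ and your $\cos\!\big(st_0-\tfrac32\big)$ is their $\cos\!\big(\tfrac12-a\big)$), and the law of cosines in the triangle $O$--Slow--exit yields the same one-variable function. Where you genuinely diverge is the finish, and there the paper is simpler: it never proves monotonicity of $f$, but instead bounds termwise, using $a\le 1$ and $\cos(\cdot)\le 1$, so that the maximum is at most $\frac2s+\sqrt{1+\frac{4}{s^2}+\frac4s}=1+\frac4s$, and then checks $1+\frac4s\le \frac{1+2\arccos(-2/s)}{s}+\sqrt{1-\frac{4}{s^2}}$ for $s\ge2$. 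Your one misjudgment is the claim that ``a crude estimate will not close it'': it does, because at linear order in $1/s$ the crude comparison only needs $\pi>3$ (your sharper endpoint needs $\pi>1+2\cos\tfrac12$, but that extra slack is not required). Concretely, with $x=2/s$ the crude inequality reads $1+2x\le \tfrac{x}{2}\bigl(1+2\arccos(-x)\bigr)+\sqrt{1-x^2}$, and the difference $h(x)=x\arccos(-x)-\tfrac{3x}{2}+\sqrt{1-x^2}-1$ satisfies $h(0)=0$ and $h'(x)=\arccos(-x)-\tfrac32\ge\tfrac\pi2-\tfrac32>0$, the $x/\sqrt{1-x^2}$ terms cancelling. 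That said, your route does go through and is not flawed: monotonicity of $f$ holds because on the domain $st_0-\tfrac32\in[-\tfrac12,\tfrac12]$ one has $g\ge 1$ and $g'\ge 2t_0+2\cos\tfrac12-4\sin\tfrac12>-0.17$, whence $f'>0$; and your endpoint inequality enjoys the same derivative cancellation, since in the variable $x$ the difference of its two sides has derivative $\arccos(-x)-\tfrac12-\frac{x+\cos(1/2)}{\sqrt{(x+\cos(1/2))^2+\sin^2(1/2)}}\ge\tfrac\pi2-\tfrac32>0$ and vanishes at $x=0$. In short: you buy a sharper intermediate bound via a monotonicity argument of the kind the paper itself uses in the BSP analysis (Lemma~\ref{thm:bsp1}); the paper buys brevity, dispensing with both of your calculus verifications at the price of a slightly lossier, but still sufficient, bound.
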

\begin{proof}
We need only care about the time $t \in \left[\frac{1}{s}, \frac{2}{s}\right]$, since for less time Slow has not yet reached the boundary. 
Imagine that the exit is discovered after $\frac{1+a}{s}$ time (for $a \in [0, 1]$).
For a visualization, the reader can refer to \autoref{fig:phaseI}.
Slow has covered $\frac{1+a}{s}$ distance on the \emph{OC} segment, while Fast has explored an $a$ part of $\wideparen{BA}$.
Slow now takes a segment from its current position (namely $D$) to the exit $E$.
To compute $|DE|$ we use the law of cosines in $\triangle DOE$.
Let $\omega = \measuredangle DOE$.
In case $a \le \frac{1}{2}$, $\omega \le \pi$, and more accurately $\omega = a + \psi + \theta = \pi + a - \frac{1}{2}$.
In case $a > \frac{1}{2}$, $\omega > \pi$, and more accurately $\omega = 2\pi - a - \psi - \theta$.
Since $\cos(2\pi - x) = \cos(x)$, we  can consider the two cases together.
We compute, 
$ |DE| = \sqrt{ |OE|^2 + |OD|^2 - 2|OE||OD|\cos(\omega)} =  \sqrt{1 + \frac{(1+a)^2}{s^2} - 2\frac{1+a}{s}\cos(\pi+ a - 1/2)} =
 \sqrt{1 + \frac{(1+a)^2}{s^2} + 2\frac{1+a}{s}\cos(1/2 - a)}$.
Overall, the worst-case evacuation time is given by
$ \max_{a \in [0, 1]} \left\{\frac{1+a}{s} + \sqrt{1 + \frac{(1+a)^2}{s^2} + 2\frac{1+a}{s}\cos(1/2 - a)}\right\} $.
To conclude the proof, it suffices to observe that $\frac{2}{s} + \sqrt{1 + \frac{2^2}{s^2} + 2\frac{2}{s}}$ is an upper bound to the above quantity, 
since $a \le 1$ and $\cos(\cdot) \le 1$. Finally, $\frac{2}{s} + \sqrt{1 + \frac{2^2}{s^2} + 2\frac{2}{s}} \leq \frac{1+2\arccos\left(-\frac{2}{s}\right)}{s} + \sqrt{1 - \frac{4}{s^2}}$
for any $s \geq 2$.
%Let $lb(s) = \frac{1+2\arccos\left(-\frac{2}{s}\right)}{s} + \sqrt{1 - \frac{4}{s^2}}$.
%We show that $lb(s) - f(s) \ge 0$ for any $s \ge 2$.
%For $s = 2$, $lb(2) - f(2) = \frac{1 + 2\pi}{2} - 1 - \sqrt{1 + 1 + 2} = \pi - \frac{5}{2} \ge 0$.
%Finally, $\frac{d(lb-f)(s)}{ds} = \frac{3 - 2\arccos(-2/s)}{s^2} < 0$ for any $s \ge 2$ and 
%$\lim_{s \rightarrow \infty} (lb(s) - f(s)) = 0$.
\end{proof}

% ------------------ PHASE I & II FIGURE -----------------
\begin{figure} % example for s = 4
\begin{center}
\begin{subfigure}{.45\textwidth} % PHASE I
 \includegraphics{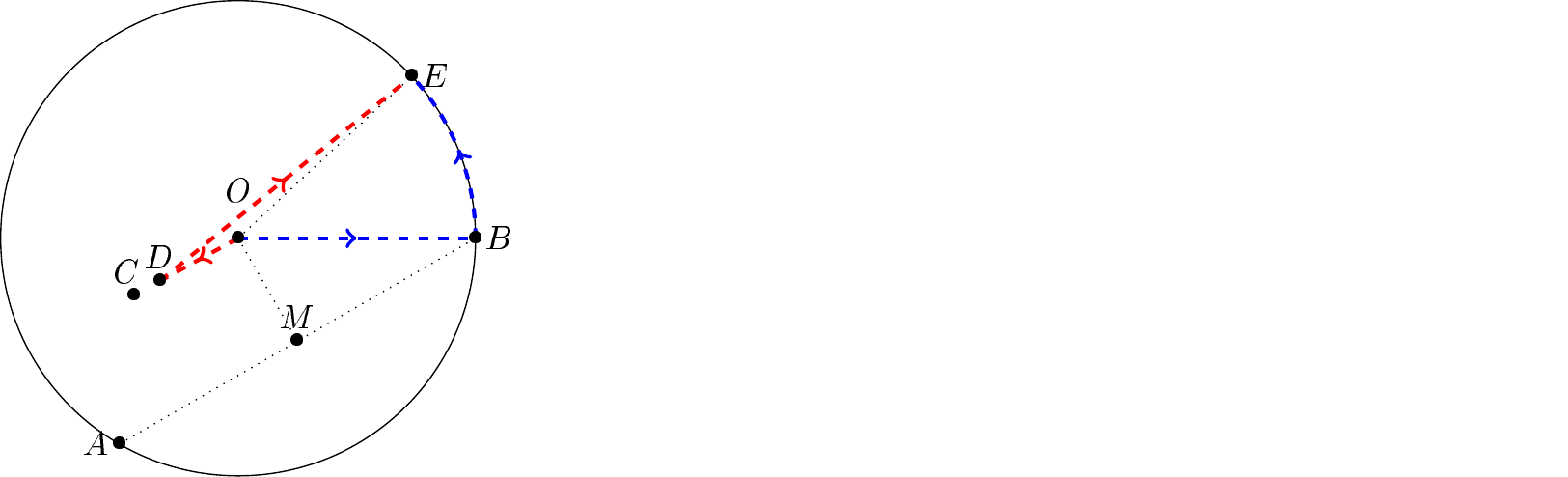}
\subcaption{Exit during Phase I ($a = 0.75$)}
\label{fig:phaseI}
\end{subfigure}
\begin{subfigure}{.45\textwidth} % PHASE II
  \includegraphics{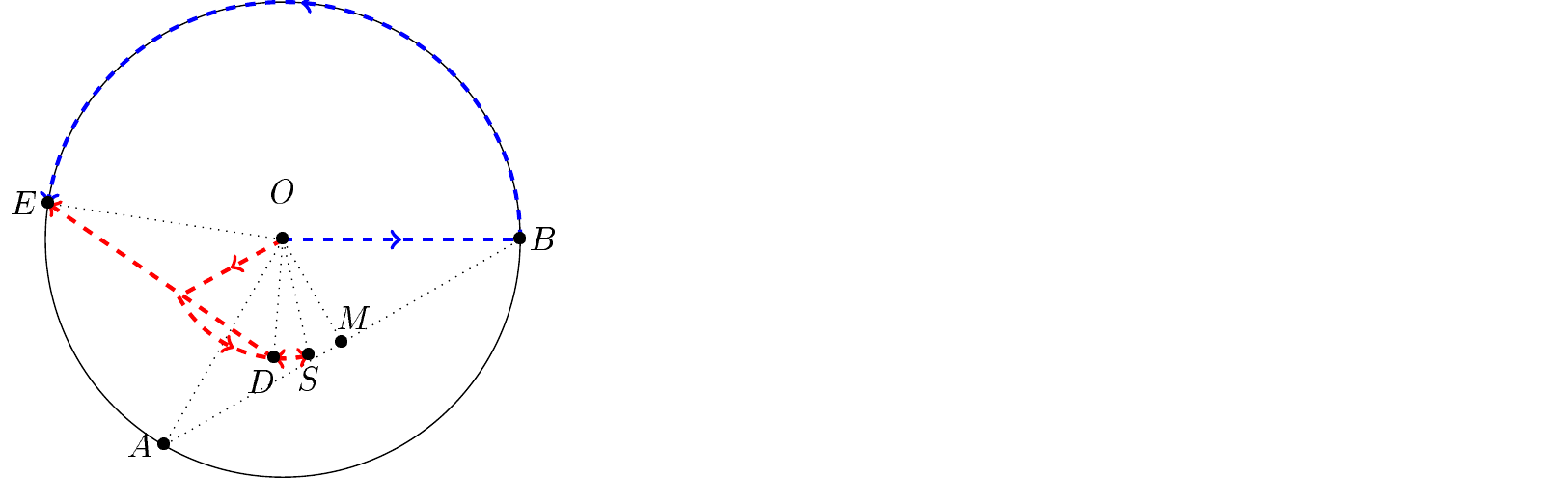}
  \subcaption{Exit during Phase II ($\tau = 0.3$)}
\label{fig:phaseII}
\end{subfigure}
\end{center}
\caption{Exit during Phase I \& II (Examples for $s = 4$)} 
\label{fig:phaseI&II}
\end{figure}
% ------------------------------------------------------

\subsubsection{Phase II}
\begin{lemma}\label{thm:II}
 The Half-Chord evacuation strategy takes at most 
 $\frac{1+2\arccos\left(-\frac{2}{s}\right)}{s} + \sqrt{1 - \frac{4}{s^2}}$
 evacuation time, if the exit is found during Phase II.
\end{lemma}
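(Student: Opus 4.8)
The plan is to mirror the structure of the Phase~I proof (\autoref{thm:I}): parameterize the moment of discovery, express the resulting evacuation time as a single function, and maximize it. During Phases~I and~II only Fast touches the boundary, so the exit is always discovered by Fast, after which Slow beelines to it (by the opening Remark). I would write the discovery time as $\frac{1+a}{s}$, where now $a\in\left[1,2\arccos\left(-\frac{2}{s}\right)\right]$ measures the arc $\wideparen{BE}$ already explored by Fast; the endpoints $a=1$ and $a=2\arccos\left(-\frac{2}{s}\right)$ correspond exactly to Slow sitting at $C$ and at $M$, so this continues Phase~I seamlessly at $a=1$. At this instant Slow has advanced an angle $\tau=\frac{a-1}{2}$ along $\wideparen{CM}$ from $C$, so its position $D$ lies at radius $\frac{2}{s}$ while the exit $E$ lies at radius $1$.

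Next I would compute the beeline length $|DE|$ with the law of cosines in $\triangle DOE$, exactly as in Phase~I. Reading off angular coordinates (with $B$ at angle $0$ and the orientation fixed by \autoref{tab:halfchord}), the exit sits at angle $a$ and $D$ at angle $\pi+\frac{a}{2}$, so $\omega=\measuredangle DOE=\pi-\frac{a}{2}\le\pi$ throughout the phase. Hence $|DE|=\sqrt{1+\frac{4}{s^2}-\frac{4}{s}\cos\left(\pi-\frac{a}{2}\right)}=\sqrt{1+\frac{4}{s^2}+\frac{4}{s}\cos\left(\frac{a}{2}\right)}$, and the worst-case Phase~II evacuation time becomes $T(a)=\frac{1+a}{s}+\sqrt{1+\frac{4}{s^2}+\frac{4}{s}\cos\left(\frac{a}{2}\right)}$, to be maximized over $a\in\left[1,2\arccos\left(-\frac{2}{s}\right)\right]$. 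A useful sanity check is that $T\left(2\arccos\left(-\frac{2}{s}\right)\right)$ collapses, via $\cos\left(\arccos\left(-\frac{2}{s}\right)\right)=-\frac{2}{s}$, to precisely the claimed bound $\frac{1+2\arccos\left(-\frac{2}{s}\right)}{s}+\sqrt{1-\frac{4}{s^2}}$, namely the time for Slow to travel $|MB|$ from $M$ to $B$; so it suffices to show the maximum is attained at the right endpoint.

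The crux is therefore the monotonicity of $T$, and here a short computation finishes the job. Differentiating gives $T'(a)=\frac{1}{s}\left(1-\frac{\sin(a/2)}{\sqrt{1+4/s^2+(4/s)\cos(a/2)}}\right)$, so (after squaring, legitimate since $\sin\left(\frac{a}{2}\right)\ge 0$ on this range) $T'(a)\ge 0$ is equivalent to $1+\frac{4}{s^2}+\frac{4}{s}\cos\left(\frac{a}{2}\right)\ge\sin^2\left(\frac{a}{2}\right)$, i.e.\ to $\left(\cos\left(\frac{a}{2}\right)+\frac{2}{s}\right)^2\ge 0$, which always holds. Thus $T$ is non-decreasing and its maximum over the phase equals $T\left(2\arccos\left(-\frac{2}{s}\right)\right)$, the target bound. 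I expect the one genuinely delicate point to be pinning down $\omega$ correctly — in particular verifying $\omega\le\pi$ for every admissible $a$, so that a single law-of-cosines expression covers the whole phase (unlike Phase~I, which had to split at $a=\frac12$); once $|DE|$ is in hand, the reduction of $T'\ge 0$ to a perfect square makes the optimization immediate.
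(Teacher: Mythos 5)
Your proof is correct, but it takes a genuinely different route from the paper's. The paper avoids calculus entirely: writing $\tau$ for the time remaining until Slow reaches $M$, it bounds Slow's post-discovery travel by a concrete two-leg path --- move \emph{backwards} along the arc on $\left(O,\frac{2}{s}\right)$ by $\tau$ to a point $D$, then go straight to the exit $E$ --- and a congruence argument (since $|OD|=|OM|=\frac{2}{s}$, $|OE|=|OA|=1$, and $\measuredangle EOD=\measuredangle AOM$ because $E$ and $D$ are shifted from $A$ and $M$ by the same central angle $s\tau$) shows this straight leg has length exactly $|AM|=\sqrt{1-\frac{4}{s^2}}$ (the paper writes $|ED|=|AB|$, a harmless typo for $|AM|$). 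Adding $\tau+\sqrt{1-\frac{4}{s^2}}$ to the discovery time $\frac{1+2\arccos\left(-\frac{2}{s}\right)}{s}-\tau$ gives the bound uniformly in $\tau$, with no optimization step at all. You instead compute the \emph{exact} beeline via the law of cosines and prove monotonicity of $T(a)$ by a derivative computation that collapses to the perfect square $\left(\cos\left(\frac{a}{2}\right)+\frac{2}{s}\right)^2\ge 0$. Your details check out: the angular positions (exit at angle $a$, Slow at angle $\pi+\frac{a}{2}$ on the radius-$\frac{2}{s}$ circle, hence $\omega=\pi-\frac{a}{2}\in[0,\pi]$) are consistent with Table~\ref{tab:halfchord}, and squaring is legitimate since $\sin\left(\frac{a}{2}\right)\ge 0$ on the admissible range. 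What each approach buys: the paper's argument is shorter and calculus-free, but it only upper-bounds Slow's travel via a (generally suboptimal) path; your computation is exact, so it additionally shows the evacuation time is nondecreasing in $a$ and that the stated bound is attained precisely when the exit is at $A$ --- a sharper structural conclusion. One pedantic point worth a sentence in a final write-up: at $s=2$ the right endpoint is $a=2\pi$, where your denominator $\sqrt{1+\frac{4}{s^2}+\frac{4}{s}\cos\left(\frac{a}{2}\right)}$ vanishes, so you should invoke monotonicity on the open interval together with continuity of $T$ at that degenerate endpoint.
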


\begin{proof}
 We prove that the worst-case placement for the exit is point $A$.
 Suppose the exit $E$ is found at the time when Slow lies on point $S$ and has not yet covered a $\tau$ part of $\wideparen{CM}$.
 The corresponding central angle is $\frac{s\tau}{2}$, since $\wideparen{CM}$ is an arc on $(O, \frac{2}{s})$.
 At the same time, Fast has not yet explored an $s\tau$ part of $\wideparen{BA}$ with a corresponding central angle of size $s\tau$.
 Then, Slow can move backwards on the boundary of $(O, \frac{2}{s})$ for another $\tau$ distance to point $D$.
 Now, the central angle from $D$ to $M$ is $\frac{s\tau}{2} + \frac{s\tau}{2}  = s\tau$ and matches the central angle between $E$ and $A$.
 Thence, due to shifting by the same central angle, we get $\measuredangle EOD = \measuredangle EOA + \measuredangle AOD = \measuredangle DOM + \measuredangle AOD = \measuredangle AOM$.
 Moreover, since $|OD| = |OM| = \frac{2}{s}$ and $|OE| = |OA| = 1$, triangles $\triangle EOD$ and $\triangle AOM$ are congruent
 meaning that $|ED| = |AB|$.
 To sum up, if the exit is discovered $\tau$ time before Slow reaches $M$, it takes at most another $\tau + \sqrt{1 - \frac{4}{s^2}}$ time for it to reach it.
 At the same time, it would take $\tau + \sqrt{1 - \frac{4}{s^2}}$ for it to reach $A$.
 Hence, exiting through $A$ is the worst-case scenario and yields a total time of $\frac{1+2\arccos\left(-\frac{2}{s}\right)}{s} + \sqrt{1 - \frac{4}{s^2}}$.
\end{proof}

\subsubsection{Phase III}
\begin{lemma}\label{thm:III}
 The Half-Chord evacuation strategy takes at most 
 $\frac{1+2\arccos\left(-\frac{2}{s}\right)}{s} + \sqrt{1 - \frac{4}{s^2}}$
 evacuation time, if the exit is found during Phase III.
\end{lemma}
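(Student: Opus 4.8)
The plan is to reduce the Phase~III bound to an elementary trigonometric inequality. During Phase~III only Fast is on the boundary, so the exit $E$ must lie on the as-yet-unexplored arc $\wideparen{AB}$ (from $A$ counterclockwise to $B$). I would write $\beta\in[0,\,2\pi-2\arccos(-2/s)]$ for the central angle $\measuredangle AOE$ swept by Fast after reaching $A$; then the exit is found at time $\frac{1+2\arccos(-2/s)+\beta}{s}$, and at that instant Slow sits at the point $P$ on segment $MB$ with $|MP|=\beta/s$ (speed $1$ over the elapsed time $\beta/s$). Adding Slow's beeline $|PE|$, the claimed bound is equivalent to $\tfrac{\beta}{s}+|PE|\le\sqrt{1-4/s^2}=|MB|$, i.e.\ to the purely geometric statement
\[
|PE|\ \le\ |MB|-|MP|\ =\ |PB| .
\]

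Next I would convert $|PE|\le|PB|$ into a comparison of central angles. The key structural fact is that the chord $AB$ is tangent at its midpoint $M$ to the circle $(O,2/s)$, so $\measuredangle OMB=\pi/2$ (Table~\ref{tab:halfchord}); hence in the right triangle $OMP$ we have $|OP|^2=\tfrac{4}{s^2}+\tfrac{\beta^2}{s^2}$ and, crucially, $\measuredangle MOP=\arctan(\beta/2)$. Since $|OE|=|OB|=1$, the law of cosines applied to $\triangle POE$ and $\triangle POB$ gives $|PE|^2-|PB|^2=2|OP|\bigl(\cos\measuredangle POB-\cos\measuredangle POE\bigr)$, so (both central angles lying in $[0,\pi]$, where $\cos$ is decreasing) the inequality $|PE|\le|PB|$ is equivalent to $\measuredangle POE\le\measuredangle POB$.

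It then remains to compute the two angles and verify this. Because $P$ lies between $M$ and $B$, I would use $\measuredangle POB=\psi-\measuredangle MOP=\pi-\arccos(-2/s)-\arctan(\beta/2)$, reading $\psi=\measuredangle MOB$ from Table~\ref{tab:halfchord}. Since both $P$ and $E$ lie on the clockwise side of $OB$ and $\measuredangle BOE=2\pi-2\arccos(-2/s)-\beta$, this yields $\measuredangle POE=\bigl|\,\pi-\arccos(-2/s)-\beta+\arctan(\beta/2)\,\bigr|$. Resolving the absolute value splits $\measuredangle POE\le\measuredangle POB$ into two inequalities: the $+$ branch reduces to $2\arctan(\beta/2)\le\beta$, which holds for all $\beta\ge0$ because $\arctan x\le x$; the $-$ branch reduces to $\beta\le 2\pi-2\arccos(-2/s)$, which is exactly the Phase~III range. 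Both endpoints give equality (the extreme cases $E=A$, consistent with Lemma~\ref{thm:II}, and $E=B$, where Fast completes the boundary), so the bound $\frac{1+2\arccos(-2/s)}{s}+\sqrt{1-4/s^2}$ holds throughout Phase~III.

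The main obstacle is conceptual rather than computational: spotting that the cumbersome time bound collapses to $|PE|\le|PB|$, and then, via the tangency $\measuredangle OMB=\pi/2$, to the single angle comparison $\measuredangle POE\le\measuredangle POB$. Once this is in place the remaining work is the short two-case verification above. The only technical care needed is to confirm that the signed quantity $\pi-\arccos(-2/s)-\beta+\arctan(\beta/2)$ stays in $(-\pi,\pi]$ over the Phase~III range (it decreases in $\beta$ from $\pi-\arccos(-2/s)\le\pi/2$ down to $-\pi+\arccos(-2/s)+\arctan(\pi-\arccos(-2/s))>-\pi$), so that its absolute value is genuinely the geometric angle $\measuredangle POE$.
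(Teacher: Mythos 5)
Your proof is correct, and it establishes the same underlying geometric fact as the paper --- that Slow's beeline to the exit is at most its remaining distance to $B$ (equivalently, that $E$ lies inside the circle centred at Slow's current position and passing through $B$) --- but by a noticeably different route. The paper argues synthetically: it draws the auxiliary circle of radius $r=\sqrt{1-4/s^2}-\tau$ around Slow's position, takes its second intersection $D$ with the unit circle, introduces the midpoint $M'$ of the chord $DB$, shows the line $OM'$ passes through Slow's position, and proves $E\in\wideparen{DB}$ via $\measuredangle AOD = 2\measuredangle MOM' = 2\arctan(s\tau/2)\le s\tau=\measuredangle AOE$; its two cases concern whether $D$ falls angularly before or after $M$. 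You instead reduce the time bound to $|PE|\le|PB|$, convert that by the law of cosines into the single comparison $\measuredangle POE\le\measuredangle POB$ at the centre, compute both angles by clockwise-angle bookkeeping from the ray $OB$, and split on the absolute value: one branch is the inequality $2\arctan(\beta/2)\le\beta$ (your $\beta$ is the paper's $s\tau$), and the other is exactly the Phase III domain constraint $\beta\le 2\pi-2\arccos(-2/s)$. Both proofs rest on the same two ingredients --- the right angle $\measuredangle OMB=\pi/2$ and $\arctan x\le x$ --- but yours dispenses with the auxiliary points $D$ and $M'$ and the chord-bisector arguments, trading the paper's geometric case analysis for an algebraic one in which one case is trivial. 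What the paper's construction buys is a visual certificate ($E$ lies on an arc wholly contained in a disk around Slow); what yours buys is a shorter, more self-contained computation whose only delicate point --- that the signed quantity $\pi-\arccos(-2/s)-\beta+\arctan(\beta/2)$ stays in $(-\pi,\pi]$, so its absolute value really is the geometric angle --- you verify correctly.
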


\begin{proof}
 Since $\frac{1+2\arccos\left(-\frac{2}{s}\right)}{s}$ time has already passed at the beginning of Phase III,
 it suffices to show that at most $\sqrt{1-\frac{4}{s^2}}$ time goes by when the exit is discovered within $\wideparen{AB}$.

Suppose that the exit is discovered $\tau$ time units after the beginning of Phase III.
Then, Slow lies at $C$ (\autoref{fig:phaseIII}), $\tau$ distance away from $M$ on the $MB$ segment.
On the other hand, Fast lies on $E$, an $s\tau$ distance away from $A$ on $\wideparen{AB}$.

Consider a disk with center $C$ and radius $r = \sqrt{1-\frac{4}{s^2}} - \tau$.
One can notice that $(C, r)$ intersects $(O, 1)$ at two points: one of them is $B$ and the other one is $D$, where $D$ is included in $\wideparen{AB}$, since $|AC| \ge r$ for any choice of $\tau \ge 0$.
Moreover, we draw the chord $DB$ and its middle point, say $M'$.
Now, notice that $OM'$ is perpendicular to $DB$, since $DB$ is a chord of $(O, 1)$ and also that $OM'$ passes through $C$, since $DB$ is also a chord of $(C, r)$.
To conclude, we exhibit that $E$ is included in $\wideparen{DB}$.
Equivalently, that $|\wideparen{AE}| \ge |\wideparen{AD}|$.
We look into two cases.

First, that $\measuredangle AOD \le \measuredangle AOM$.
In this case, we compute $\measuredangle AOD = \measuredangle AOM - \measuredangle DOM = \measuredangle MOB - \measuredangle DOM = \measuredangle MOM' + \measuredangle M'OB - \measuredangle DOM = \measuredangle MOM' + \measuredangle DOM' - \measuredangle DOM = 2 \cdot \measuredangle MOM'$, 
since $\measuredangle AOM = \measuredangle MOB$ and $\measuredangle M'OB = \measuredangle DOM'$ from the fact that $OM$ ($OM'$) bisects $AB$ ($DB$). Moreover, $\measuredangle DOM' - \measuredangle DOM =  \measuredangle MOM'$.
We compute $\measuredangle MOM' = \arctan(s\tau/2)$ by the right triangle $\triangle MOC$.
Finally, $\measuredangle AOD = 2\arctan(s\tau/2) \le s\tau = \measuredangle AOE$, since $\arctan(x) \le x$ for $x \ge 0$.

For the second case, $\measuredangle AOD > \measuredangle AOM$.
Then, $\measuredangle AOD = \measuredangle AOM + \measuredangle MOD =  \measuredangle MOB + \measuredangle MOD = 
\measuredangle MOM' + \measuredangle M'OB + \measuredangle MOD = \measuredangle MOM' + \measuredangle DOM' + \measuredangle MOD = 2\cdot \measuredangle MOM'$, again by using the equalities deriving from bisecting the chords.
The rest of the proof follows as before.
\end{proof}

% ------------------ PHASE III FIGURE -----------------
\begin{figure} % example for s = 4
  \begin{center}
   \begin{subfigure}{.45\textwidth} % first case
     \includegraphics{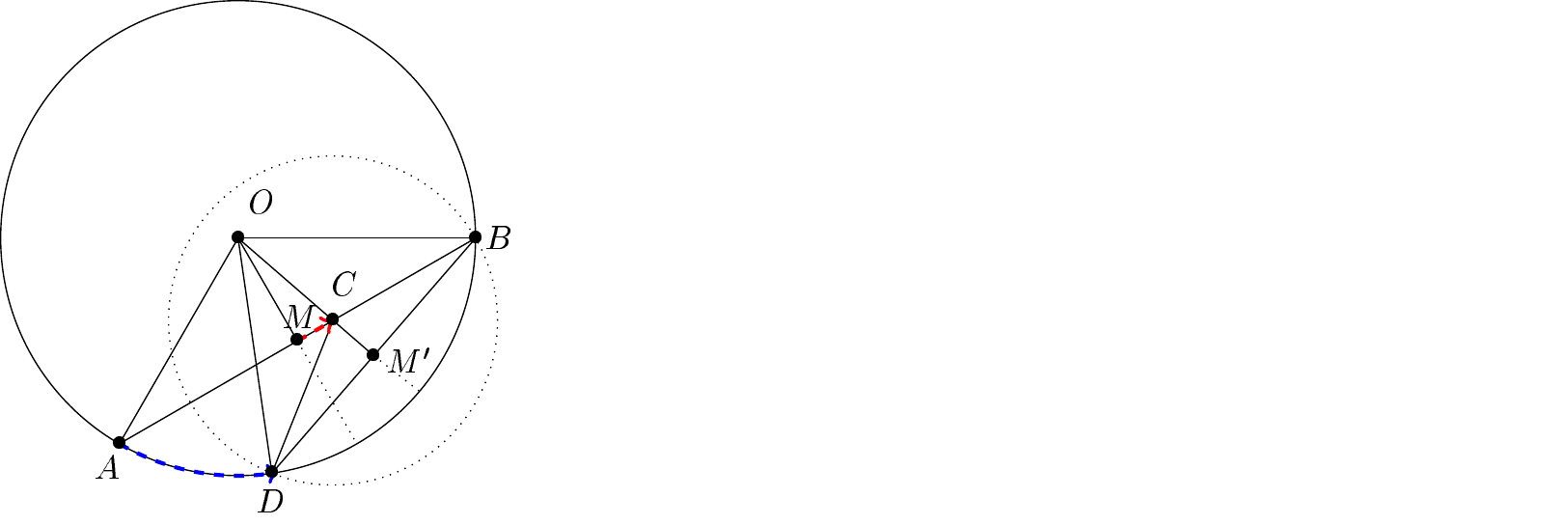}
 \subcaption{First case ($\tau = \frac{1}{5}$)}
 \label{fig:IIIi}
\end{subfigure}
\begin{subfigure}{.45\textwidth} % second case
     \includegraphics{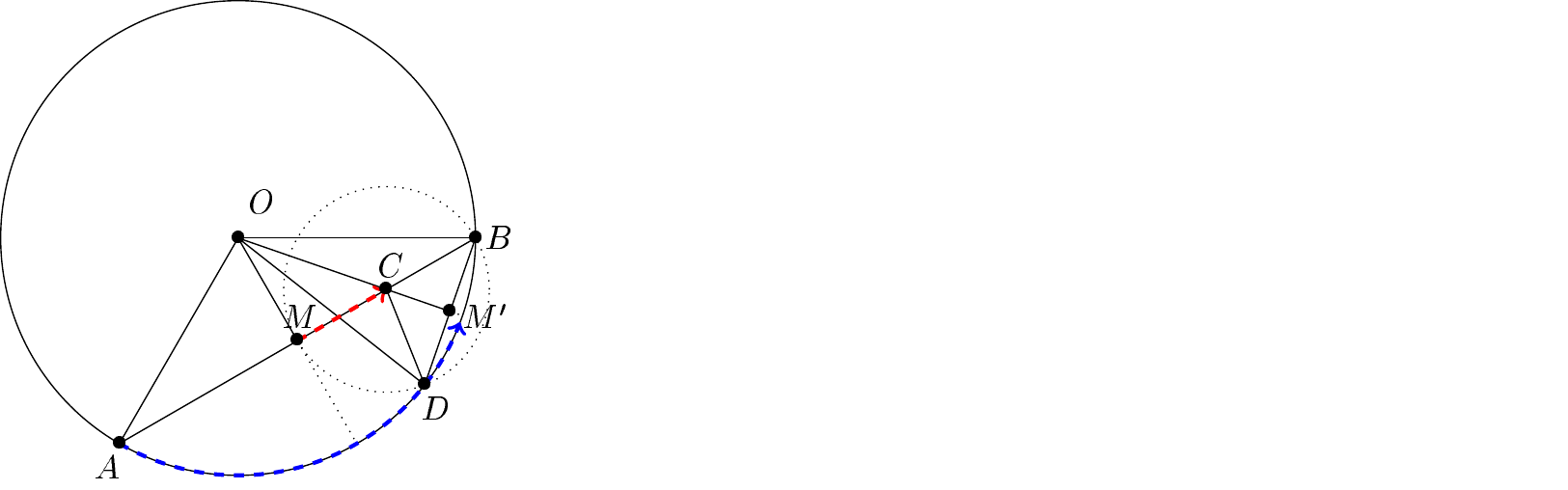}
 \subcaption{Second case ($\tau = \frac{1}{2}$)}
 \label{fig:IIIii}
\end{subfigure}
\end{center}
\caption{Exit during Phase III \\ (Example for $s = 4$; The exit $E$ lies at the end of the blue arrow)}
\label{fig:phaseIII}
\end{figure}
% ------------------------------------------------------

\begin{comment}
\subsection{$s \in (1.856, 2)$: The Slow Half-Chord Strategy}
 For the remaining interval of possible speeds for the fast robot, it suffices to consider 
 a \emph{slow} version of the Half-Chord Strategy for the case $s=2$.
 That is, the two robots follow the same trajectories as in the $s=2$ Half-Chord strategy.
 But, in this case, the fast robot is moving with speed $s \in (1.856, 2)$, while the slow 
 one is moving with speed $\frac{s}{2} \le 1$.
 The ratio between the two speeds remains the same (i.e. $2$), distances are preserved and 
 hence the $\frac{1+2\pi}{s}$ upper bound also applies.
 
 \begin{theorem}
  For any $s \in (1.856, 2)$, the worst-case evacuation time of the Slow Half-Chord strategy 
  is at most $\frac{1+2\pi}{s}$, which is optimal.
 \end{theorem}
 \begin{proof}
  The two robots follow the exact same trajectories as in the $s=2$ Half-Chord strategy only 
  with a time delay factor of $\frac{2}{s}$.
  Thence, the worst-case evacuation time becomes $\frac{2}{s} \cdot \frac{1+2\pi}{2} = \frac{1+2\pi}{s}$.
  Optimality follows from \autoref{thm:lb}.
 \end{proof}
\end{comment}

\subsection{The Half-Chord Strategy for $1 \leq s \leq 2$}\label{sec:slow-hc}
We first observe that, for $s=2$, the name ``Half-Chord'' is slightly misleading, as the points $A$, $B$, and $M$ coincide.
The time needed for $s = 2$ is, as shown in Theorem \ref{thm:fes_ub}, $\frac{1+2\pi}{s}$.
Note also that the Half-Chord strategy is a BES strategy for $s=2$.

For $s<2$, Slow can simply move even slower, namely with speed $\frac{s}{2}$.
Using the same paths as for $s=2$, this provides the same upper bound of  $\frac{1+2\pi}{s}$.

\begin{theorem}
For $1 \leq s \leq 2$, the (generalized) Half-Chord strategy leads to a $\frac{1+2\pi}{s}$ evacuation time.
\end{theorem}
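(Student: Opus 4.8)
The plan is to reduce the whole range $1\le s\le 2$ to the single already-solved case $s=2$ by a uniform time-rescaling argument, so that no new geometry needs to be analyzed. First I would record the base case. Instantiating Theorem~\ref{thm:fes_ub} at $s=2$ gives worst-case evacuation time $\frac{1+2\arccos(-1)}{2}+\sqrt{1-1}=\frac{1+2\pi}{2}$, and (as noted just above) at $s=2$ the points $A$, $B$, $M$ coincide, so the two trajectories collapse to concrete curves in the disk: Fast traces the radius $OB$ and then the entire boundary, while Slow traces the radius $OC$ and then a boundary arc terminating at $B$. I would fix these two curves once and for all; they do not depend on $s$.

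Next I would define the generalized strategy for $s\in[1,2]$ by letting each robot trace exactly these $s=2$ curves, but with Fast moving at its maximal speed $s$ and Slow moving at speed $\frac{s}{2}$. The single feasibility check is that both speeds are admissible: Fast's speed equals its maximum, while Slow's speed $\frac{s}{2}\le 1$ lies within its maximum precisely because $s\le 2$. The crucial observation is that these are exactly the $s=2$ speeds (namely $2$ and $1$) scaled by the common factor $\frac{s}{2}$, so the generalized motion is nothing but the $s=2$ motion run on a slowed clock.

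The core step is to exploit this common factor. In the generalized run each robot occupies at time $t$ precisely the position it occupied at time $\frac{s}{2}t$ in the $s=2$ run; hence the two robots' relative configuration, the temporal order of all events, the identity of the robot that first reaches any fixed exit, and the discovery time are all preserved up to the time-scaling $t\mapsto\frac{2}{s}t$. In particular, at the (rescaled) moment of discovery the beeliner sits at the same point as in the $s=2$ run, so it must cover the same fixed segment to the exit; traversing it at the rescaled speed, the post-discovery time scales by $\frac{2}{s}$ as well. I would conclude that, exit position by exit position, the total evacuation time equals $\frac{2}{s}$ times the corresponding $s=2$ evacuation time, and taking the worst case over exits together with the base bound yields $\frac{2}{s}\cdot\frac{1+2\pi}{2}=\frac{1+2\pi}{s}$.

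The one place that deserves care — the main obstacle — is making the reparametrization rigorous across the two discovery regimes (Fast finding the exit versus Slow finding it) and across the phase boundaries of Slow's path, so that slowing the clock introduces no new crossing of regimes. Since a uniform time-rescaling preserves the order of events and the relative positions at corresponding instants, the discovering robot and the beeliner's starting point are inherited unchanged from the $s=2$ analysis; I would state this explicitly. Everything else is a one-line scaling computation, and I would not expect any nontrivial estimation to be required.
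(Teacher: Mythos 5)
Your proposal is correct and matches the paper's own argument: the paper likewise fixes the $s=2$ Half-Chord trajectories (noting $A$, $B$, $M$ coincide and the $s=2$ bound from Theorem~\ref{thm:fes_ub} is $\frac{1+2\pi}{2}$), has Slow move at speed $\frac{s}{2}\le 1$ along the same paths, and concludes the evacuation time scales by $\frac{2}{s}$ to give $\frac{1+2\pi}{s}$. Your write-up is merely more explicit about why the uniform rescaling preserves the worst-case analysis, which the paper leaves implicit.
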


\subsection{The Both-to-the-Same-Point Strategy}
This $BES$ strategy follows the same key idea presented in \cite{s1} where proven to be 
optimal for $s = 1$.

\begin{paragraph}{The Strategy.}
In the \emph{Both-to-the-Same-Point Strategy} (shortly \emph{BSP} strategy), 
initially both robots set out toward the same boundary point moving in a beeline.
Once they arrive there, they move to opposite directions along the boundary.
Without loss of generality, Fast moves counterclockwise along the boundary, while Slow moves clockwise.
This goes on, until the exit has been found by either robot or the robots meet each other on the boundary. 
For a visualization of the strategy, see \autoref{fig:bsp}.
Fast's trajectory is given in blue, while Slow's in red.
\end{paragraph}

\begin{figure} 
\begin{center}
 \begin{subfigure}{.45\textwidth} % THE BSP STRATEGY
 \begin{center}
  \includegraphics{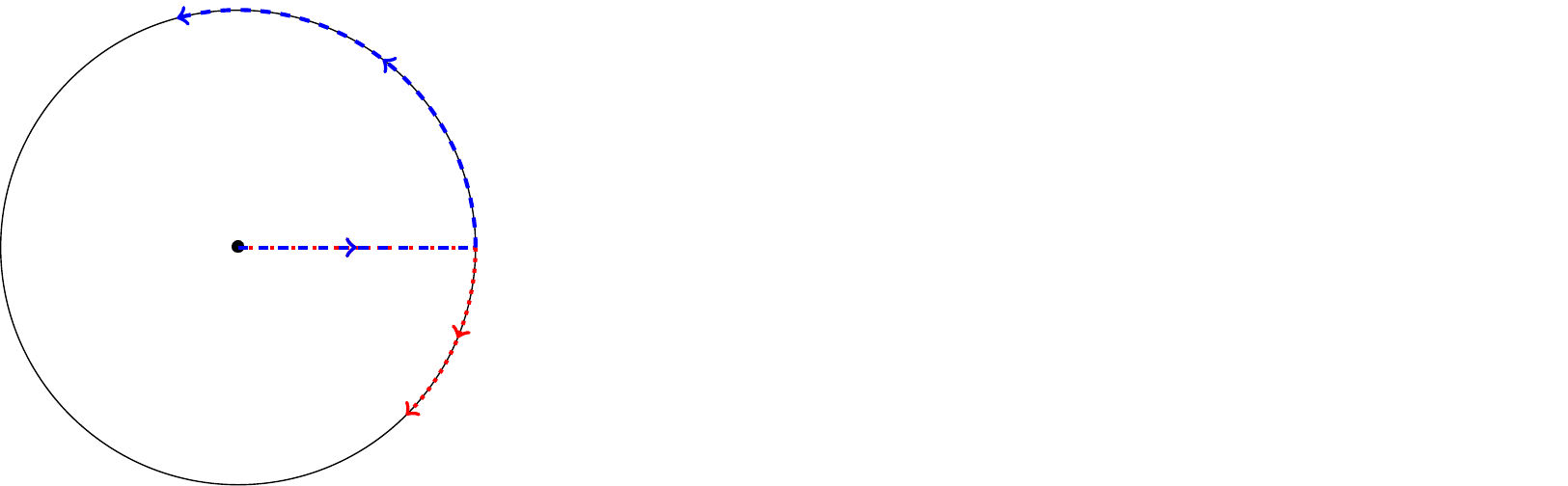}
 \end{center}
 \subcaption{The $BSP$ Strategy}
 \label{fig:bsp}
 \end{subfigure}
 \begin{subfigure}{.45\textwidth}	% EXIT BEFORE SLOW EXPLORES
 \begin{center}
 \includegraphics{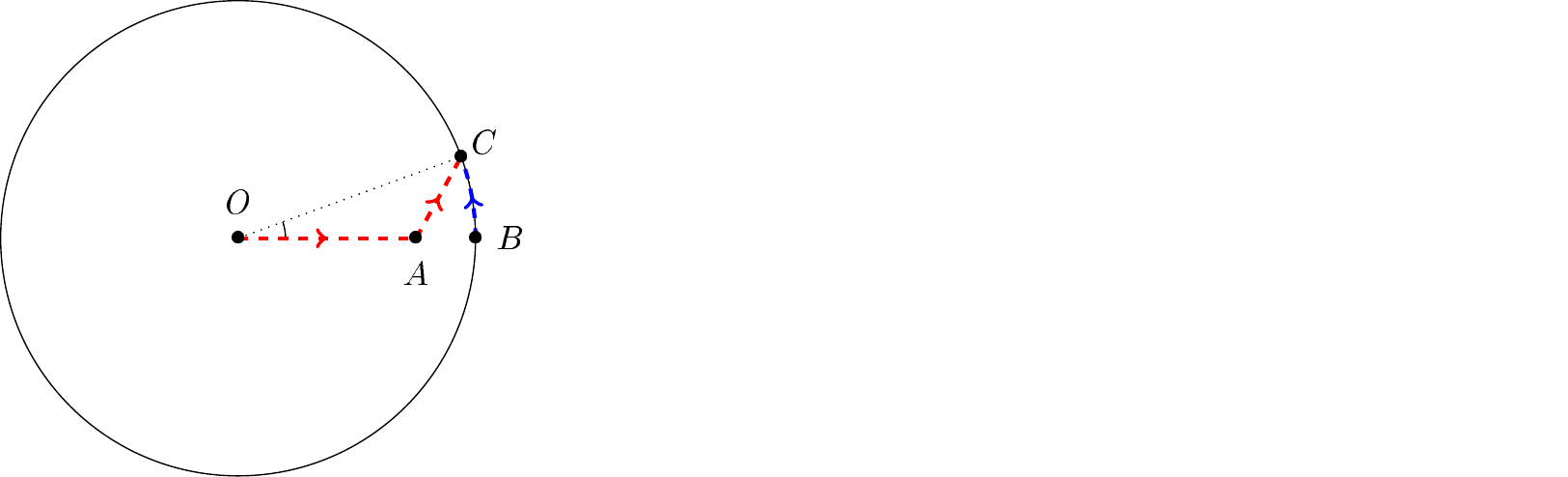}
\end{center}
\subcaption{Exit before Slow reaches the boundary \\ (Example for $s = 1.8$ and $a = 0.35$)}
\label{fig:cos}
\end{subfigure}
\end{center}
\caption{The $BSP$ Strategy and an Evacuation Example}
\label{fig:BSP}
\end{figure}

Below, we restrict the analysis of the BSP strategy only for $s \in [1, 2]$, since for $s > 2$ the strategy presented in the previous section yields a stronger upper bound.
The rest of the section is devoted to proving the main theorem.

\begin{theorem}\label{thm:bsp}
The BSP strategy requires evacuation time at most $1+ 2\sqrt{1 - \frac{1}{(s+1)^2}} + \frac{2\arccos(-\frac{1}{s+1}) -s + 1}{s+1}$ when $s \in [1,2]$.
\end{theorem}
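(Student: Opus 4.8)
The plan is to track the two robots' positions as functions of time, locate the worst-case exit by optimizing the evacuation time over all boundary points, and verify that this worst case reproduces the claimed expression exactly.

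First I set up coordinates with the common target point $P=(1,0)$ and measure boundary positions by signed central angle. Fast reaches $P$ at time $\frac1s$ and then sweeps counterclockwise at angular speed $s$, so at time $t\ge\frac1s$ it sits at angle $st-1$; Slow reaches $P$ at time $1$ and then sweeps clockwise at angular speed $1$, so at time $t\ge 1$ it sits at angle $-(t-1)$. Exits on the counterclockwise arc are discovered by Fast and those on the clockwise arc by Slow, the dividing ``meeting'' angle being the one both reach simultaneously. Whoever finds the exit instantaneously alerts the other, who follows the straight chord to it; the evacuation time for a fixed exit is the discovery time plus that chord length, and two boundary points separated by a central angle $\Delta$ subtend a chord of length $2\sin(\Delta/2)$, valid for all $\Delta\in[0,2\pi]$.

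The decisive case is an exit on Slow's clockwise arc, at angle $-\psi$. Slow discovers it at time $1+\psi$, at which moment Fast lies at angle $s-1+s\psi$, so the central angle between Fast and the exit is $\Delta=(s-1)+(s+1)\psi$ and
\[
T(\psi)=1+\psi+2\sin\!\left(\tfrac{(s-1)+(s+1)\psi}{2}\right).
\]
Then $T'(\psi)=1+(s+1)\cos(\Delta/2)$, whose unique interior zero satisfies $\cos(\Delta/2)=-\frac{1}{s+1}$, i.e. $\Delta=2\arccos\!\left(-\frac{1}{s+1}\right)$; there $\sin(\Delta/2)=\sqrt{1-\frac{1}{(s+1)^2}}$ and $\psi=\frac{2\arccos(-\frac{1}{s+1})-s+1}{s+1}$, and substituting back gives precisely the stated bound. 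I would confirm this is a maximum: as $\psi$ grows, $\Delta/2$ increases through the range $(0,\pi)$ on which $\cos$ is decreasing, so $T'$ switches from positive to negative. I would also check admissibility, namely $\psi\ge 0$ and that Slow indeed reaches this exit before Fast (equivalently $\arccos(-\frac1{s+1})\le\pi$), both of which hold for $s\in[1,2]$.

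To finish I dispatch the remaining exit placements. For an exit on Fast's arc found after Slow has reached $P$, the analogous differentiation yields the identical critical condition $\cos(\Delta/2)=-\frac1{s+1}$ and, after simplification, the identical value of $T$, so this regime is no worse. The last case is an exit found by Fast while Slow is still en route to the boundary (Figure~\ref{fig:cos}): Slow is at distance $\frac{1+\phi}{s}$ from the centre along $OP$, so by the law of cosines the evacuation time is $\frac{1+\phi}{s}+\sqrt{1+\frac{(1+\phi)^2}{s^2}-2\frac{1+\phi}{s}\cos\phi}$, which I bound over $\phi\in[0,s-1]$ (it matches the boundary-exploration case continuously at $\phi=s-1$) to show it stays below the claimed quantity. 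The main obstacle is the bookkeeping of which robot discovers which exit together with verifying that the single interior critical point is the global maximum across all three regimes; the optimization itself is a routine one-variable calculation.
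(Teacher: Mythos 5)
Your proposal is correct and takes essentially the same route as the paper: split on whether the exit is found before or after Slow reaches the boundary, handle the first phase by the law of cosines, and maximize $T(\psi)=1+\psi+2\sin\left(\frac{(s-1)+(s+1)\psi}{2}\right)$ via the critical condition $\cos(\Delta/2)=-\frac{1}{s+1}$, exactly as in the paper's Lemmas~\ref{thm:bsp1} and~\ref{thm:bsp2}. The only (harmless) slip is one of labeling: the binding regime is an exit found by \emph{Fast} with Slow traversing the chord at speed $1$, whereas in your ``Slow discovers'' case Fast would cross the chord in time $\frac{2}{s}\sin(\Delta/2)$, so there your $T(\psi)$ is an overestimate---which is fine for an upper bound, and is precisely why the paper dismisses that case as dominated in its closing remark.
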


\subsubsection{Exit found before Slow reaches the boundary}
\begin{lemma}\label{thm:bsp1}
 It takes at most $1 + \sqrt{2 - 2\cos(s-1)}$ time (where $s \in [1, 2]$) for both robots to evacuate in the $BSP$ strategy, 
when the exit is found before the slow robot has reached the boundary.
\end{lemma}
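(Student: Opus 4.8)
The plan is to parametrize the moment of discovery and reduce the claim to a single-variable maximization. Since neither robot touches the boundary before time $1/s$, and Slow reaches the boundary only at time $1$, the relevant window is $t \in [1/s, 1]$, during which only Fast (being the one on the boundary) can discover the exit. At such a time $t$, Fast has traversed a counterclockwise arc of central angle $a = st - 1 \in [0, s-1]$ from the common target $B$, so it sits at the exit $E$ with $|OE| = 1$, while Slow sits at the point $D$ on segment $OB$ with $|OD| = t$. The total evacuation time is then $t + |DE|$, and the law of cosines in $\triangle DOE$ (with $\measuredangle DOE = a$) gives $|DE| = \sqrt{t^2 + 1 - 2t\cos a}$. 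Writing $t = (1+a)/s$, I would define
\[ T(a) = \frac{1+a}{s} + \sqrt{\frac{(1+a)^2}{s^2} + 1 - 2\,\frac{1+a}{s}\cos a}, \qquad a \in [0, s-1], \]
and aim to show that $T$ attains its maximum at the right endpoint $a = s-1$, where $t = 1$ and $T(s-1) = 1 + \sqrt{2 - 2\cos(s-1)}$, which matches the claimed bound.

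The main step is to prove that $T$ is nondecreasing on $[0, s-1]$. Differentiating and abbreviating $g = t^2 + 1 - 2t\cos a$ for the radicand, a short computation (using $st = 1+a$) shows that $T'(a) \ge 0$ is equivalent to the inequality
\[ \sqrt{g} + t - \cos a + (1+a)\sin a \ge 0. \]
Since $s \le 2$ we have $a \in [0, s-1] \subseteq [0,1]$, so $\sin a \ge 0$ and the term $(1+a)\sin a$ is nonnegative; it therefore suffices to control $\sqrt{g} + t - \cos a$.

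The key observation, which I expect to be the crux of the argument, is the clean lower bound $\sqrt{g} \ge 1 - t$. This follows immediately from $g - (1-t)^2 = 2t(1 - \cos a) \ge 0$ together with $t \le 1$; equivalently, it is nothing more than the triangle inequality $|DE| \ge |OE| - |OD|$. Consequently $\sqrt{g} + t \ge 1 \ge \cos a$, so $\sqrt{g} + t - \cos a \ge 1 - \cos a \ge 0$ and the displayed inequality holds. Hence $T' \ge 0$ throughout, $T$ is maximized at $a = s-1$, and the worst-case time over this phase is at most $1 + \sqrt{2 - 2\cos(s-1)}$, as required. As a sanity check, the left endpoint $a = 0$ gives $T(0) = 1$, corresponding to Fast finding the exit at $B$ the instant it reaches the boundary while Slow merely completes its beeline to $B$; this confirms that the bound grows monotonically away from the trivial case toward the worst case at $t = 1$.
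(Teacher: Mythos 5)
Your proof is correct and takes essentially the same route as the paper's: parametrize by the arc length $a$ explored by Fast, express the evacuation time via the law of cosines in the triangle formed by the center, Slow's position, and the exit, and show the resulting function of $a$ is nondecreasing so that the maximum occurs at $a = s-1$. The only difference is that the paper merely asserts $\frac{\partial}{\partial a} f(a,s) \ge 0$ without justification, whereas you actually establish it via the reduction to $\sqrt{g} + t - \cos a + (1+a)\sin a \ge 0$ and the triangle-inequality bound $\sqrt{g} \ge 1 - t$, which fills in the detail the paper omits.
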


\begin{proof}
Let $a$ stand for the distance Fast has explored on the boundary before finding the exit.
Notice that $a \le s-1 \le 1$, since $a$ stands for a covered distance before Slow reaches the boundary.
The total evacuation time is  the time needed for Fast to find the exit and then for Slow to reach it.
Let $b$ stand for the latter.
Then, the worst-case evacuation time is
$ \max_{0\le a \le s-1} \left\{\frac{a+1}{s} + b\right\}$
, where 
$ b = \sqrt{1 + \left(\frac{a+1}{s}\right)^2 - 2\cdot\frac{a+1}{s}\cos(a)} $
by the cosine law in the formed triangle ($\triangle OAC$ in \autoref{fig:cos} with $|OC| = 1$, $|OA| = \frac{1+a}{s}$ and $\measuredangle AOC = a$).
Let $f(a,s) = \frac{a+1}{s} + b$.
Then, $\frac{\partial}{\partial a}f(a) \ge 0$ for any $a \le s-1$.
Consequently, $f(a,s)$ is a non-decreasing function of $a$ in this interval meaning that the maximum is attained on $a = s-1$.
This results to a worst-case time of $1 + \sqrt{2 - 2\cos(s-1)}$.
\end{proof}

\begin{comment}
% -------------------------------------------------
\begin{figure} % example for s = 1.8, a = 0.35
\begin{center}
 \begin{tikzpicture}[scale = 2.5]
 
 % auxiliary lines
 \draw (0,0) circle (1 cm); 
 \draw[dotted] (0,0) -> (0.9394, 0.3429);
 \draw (0.2,0) arc (0:deg(0.35):0.2); % angle
 
 \begin{scope}[very thick,decoration={
    markings,
    mark=at position 0.5 with {\arrow{>}}}
    ] 
  % slow
  \draw[red, dashed, postaction={decorate}] (0, 0) ->  (1.35/1.8, 0);
  \draw[red, dashed, postaction={decorate}] (1.35/1.8, 0) -> (0.9394, 0.3429);
  % fast
%  \draw[blue, dashed, postaction={decorate}] (0,0) -> (1,0);
  \draw[blue, dashed, postaction={decorate}] (1,0) arc (0: deg(0.35): 1);
 \end{scope}
 % points
 \foreach \Point in {(0, 0), (1,0), (1.35/1.8, 0), (0.9394, 0.3429)}{
    \node at \Point {\textbullet};
}
 % labels
 \node (O) at (0,0.15) {$O$};
 \node (B) at (1.15, 0) {$B$};
 \node (A) at (1.35/1.8, 0-0.15) {$A$};
 \node (C) at (0.9394+0.1, 0.3429+0.06) {$C$};
 \end{tikzpicture}
\end{center}
\caption{Exit before slow explores the boundary (Example for $s = 1.8$ and $a = 0.35$)}
\label{fig:cos}
\end{figure}
% ---------------------------------------------------------------------
\end{comment}

\subsubsection{Exit found after Slow has explored a part of the boundary}
\begin{lemma}\label{thm:bsp2}
 In the $BSP$ strategy (where $s \in [1, 2]$), when the exit is found after Slow has explored some part of the boundary, the evacuation time is at most
 \begin{itemize}
  \item  $\frac{2s + \pi + 4}{s+1}$, when the angle between the two robots is less or equal to $\pi$ and
  \item  $1+ 2\sqrt{1 - \frac{1}{(s+1)^2}} + \frac{2\arccos(\frac{1}{-s-1}) -s + 1}{s+1}$ when the angle is between $\pi$ and $2\pi$.
 \end{itemize}
\end{lemma}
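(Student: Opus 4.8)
The plan is to reduce the worst case, over all exit placements discovered only after Slow has started exploring the boundary, to a single-variable maximization. Measure time $t$ from the common start, so that at $t=1$ Slow reaches the shared boundary point $B$ (the regime $t<1$, where only Fast is exploring, is exactly the content of \autoref{thm:bsp1}). For $t\ge 1$ both robots are on the unit circle: Fast is at counterclockwise arc-distance $st-1$ from $B$ and Slow at clockwise arc-distance $t-1$. Let $\delta=(s+1)t-2$ be the central angle of the already-explored arc (the arc through $B$) separating the two robots; this is the ``angle between the two robots'' of the statement, and it increases from $s-1$ at $t=1$ to $2\pi$ at the meeting time $t_{\mathrm{meet}}=\frac{2\pi+2}{s+1}$.

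The key step is to write the evacuation time for a fixed discovery time $t$. If the exit is discovered at time $t$, the discovering robot stands on it while the other robot beelines to it. Both robots lie on the unit circle separated by central angle $\delta$, so the law of cosines gives the beeline length $\sqrt{2-2\cos\delta}=2\sin(\delta/2)$ (valid also in the reflex range $\delta>\pi$, since $\cos\delta=\cos(2\pi-\delta)$ and $\sin(\delta/2)\ge 0$ on $[0,2\pi]$). Crucially this is symmetric: whether it is Fast or Slow that finds the exit, the two robots are separated by the same $\delta$, so in either case the evacuation time is $T(t)=t+2\sin(\delta/2)$. As the exit sweeps the unexplored arc, $\delta$ sweeps $[s-1,2\pi]$, so it suffices to maximize $T=\frac{\delta+2}{s+1}+2\sin(\delta/2)$ as a function of $\delta$.

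Now I differentiate: $\frac{dT}{d\delta}=\frac{1}{s+1}+\cos(\delta/2)$. In the first case, $\delta\le\pi$, we have $\cos(\delta/2)\ge 0$, so $T$ is strictly increasing and attains its maximum over this range at $\delta=\pi$, with value $\frac{\pi+2}{s+1}+2=\frac{2s+\pi+4}{s+1}$; this gives the first bound. In the second case, $\pi<\delta<2\pi$, the unique stationary point solves $\cos(\delta/2)=-\frac{1}{s+1}$, i.e.\ $\delta^\star=2\arccos\!\left(-\frac{1}{s+1}\right)$; for $s\in[1,2]$ one checks $\delta^\star\in(\pi,2\pi)$ and that it is a maximizer. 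Substituting $\sin(\delta^\star/2)=\sqrt{1-\frac{1}{(s+1)^2}}$ and regrouping $\frac{\delta^\star+2}{s+1}=1+\frac{2\arccos(-\frac{1}{s+1})-s+1}{s+1}$ yields exactly the second bound, which is the expression appearing in \autoref{thm:bsp}.

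I expect the main obstacle to be the justification that both discovery scenarios collapse to the single quantity $T(t)$ — that is, verifying the adversary gains nothing from choosing which robot finds the exit, and that letting the exit range over the whole unexplored arc really does sweep $\delta$ monotonically through $[s-1,2\pi]$. The remaining care is bookkeeping: confirming the chord formula in the reflex regime $\delta>\pi$, and checking that $\delta^\star$ lands in $(\pi,2\pi)$ and is a maximum (this is where $s\in[1,2]$ enters). As a consistency check, at the left endpoint $\delta=s-1$ the formula reduces to $1+\sqrt{2-2\cos(s-1)}$, matching the endpoint of \autoref{thm:bsp1}.
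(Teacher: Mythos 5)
Your proof is correct and follows essentially the same route as the paper's: the paper maximizes the identical quantity, written as $g(d,s)=1+\frac{d}{s}+2\sin\left(\frac{s-1+d(1+1/s)}{2}\right)$ in terms of Fast's post-time-$1$ arc distance $d$, which is exactly your $T(\delta)=\frac{\delta+2}{s+1}+2\sin\left(\frac{\delta}{2}\right)$ under the affine change of variables $\delta=s-1+d\left(1+\frac{1}{s}\right)$, with the same two cases resolved the same way (monotonicity giving the maximum at $\delta=\pi$, and the interior critical point $\delta^{\star}=2\arccos\left(-\frac{1}{s+1}\right)$). One small correction: the two discovery scenarios are not symmetric as you claim --- if Slow finds the exit, Fast crosses the chord at speed $s$, giving time $t+\frac{2\sin(\delta/2)}{s}\le T(t)$ --- so $T(t)$ bounds that case by domination rather than by equality, which is precisely how the paper dispatches it (``the time taken for Fast to traverse the same chord will be less than the worst-case examined'').
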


\begin{proof}
Let $d$ stand for the distance Fast has covered on the boundary to find the exit counted only after Slow has started exploring.
Using this notation, the explored part of the boundary is a function of $d, s$, namely $angle(d, s)  = s-1 + d + \frac{d}{s} = s - 1 + d(1+\frac{1}{s})$, since Slow explores distance $\frac{d}{s}$, while Fast explores distance $d$, and an $s-1$ part has already been covered.
The name $angle(\cdot, \cdot)$ is chosen, since the quantity also represents the angle between the robots from the center of the unit disk.
We break the analysis into two cases:
\begin{itemize}
 \item $angle(d, s) \le \pi$: \\
	In this case, $s - 1 + d(1+\frac{1}{s}) \le \pi$, which results to $d \le \frac{\pi-s+1}{1+1/s}$.
	Notice that the bound is $\ge 0$ for $s \in [1, \pi+1]$.
	The worst-case evacuation time is given by computing the function 
	$\max_{0 \le d \le \frac{\pi-s+1}{1+1/s}} \left\{1 + \frac{d}{s} + 2\sin\left( \frac{d(1+\frac{1}{s}) + s  -1}{2}\right)\right\}$,
	where the last addend accounts for the chord length needed to be covered by Slow.
	We denote $g(d,s)$ the function to be maximized.
	Similarly to before, we can see that $\frac{\partial}{\partial d} g(d, s) \ge 0$ for any choice of $s \in [1, 2]$ and any $d \in [0,\frac{\pi-s+1}{1+1/s}]$.
	Hence, the maximum is attained at $d = \frac{\pi-s+1}{1+1/s}$ for a worst-case time of $\frac{2s + \pi + 4}{s+1}$.

\item $\pi < angle(d, s) < 2\pi$: \\
	In this case,  $d \in (\frac{\pi-s+1}{1+1/s}, \frac{2\pi-s+1}{1+1/s})$.
        The function to be maximized is again $g(d,s)$.
        The maximum is attained for $d' = \frac{2\cdot s \cdot \arccos(-1/(s+1))-s+1}{s+1}$ yielding an upper bound of
        $1+ 2\sqrt{1 - \frac{1}{(s+1)^2}} + \frac{2\arccos(\frac{1}{-s-1}) -s + 1}{s+1}$  for $1 \le s \le 2$.
\end{itemize}

Finally, notice that we need not care about the case where Slow finds the exit, since the time taken for Fast to traverse the same chord will be less than the worst-case examined.
\end{proof}

%%%%%%%%%%%%%%%%%%%%%%%%%%% [THE NEW UPPER BOUND] %%%%%%%%%%%%%%%%%%%%%%%%%%%%%%%%%%%%%%%%%

\subsection{The Fast-Chord Strategy}\label{sec:new-ub}
In the Half-Chord strategy for $s=2$, we observe that the final point reached after Phase I, that is point $C$, lies on the disk boundary.
Thence, after that, Slow explores $\wideparen{CB}$, but so does Fast (since by its strategy it explores the whole boundary).
This seems like an unnecessary double-exploring of this part of the boundary.
Thus, we propose a new strategy, where Fast reaches $C$ as usual, but then traverses the $CB$ chord, instead of $\wideparen{CB}$.
Furthermore, we could variate the position of $C$, in order for Fast to reach $B$ (for the second time) exactly when Slow reaches $D$ (a point before $B$) and so get Fast to explore some part of the boundary in clockwise fashion as well.
In this case, Slow does not traverse the whole $\wideparen{CB}$.
Let us now describe more formally this \emph{Fast-Chord} family of strategies. 
All arcs are considered in \emph{counterclockwise} fashion unless otherwise stated.
In the description below, let $|\wideparen{BA}| = s -1$, $x_1 = |\wideparen{AC}|$, $x_2 = |CB|$, $x_3 = |\wideparen{DB}|$ and $y = |\wideparen{CB}|$.
For a pictorial representation, the reader can refer to \autoref{fig:fast-chord}.

\begin{paragraph}{The Strategy.}
 \emph{Fast} moves as follows until the exit is found:
\begin{itemize}
 \item for $t \in \left[0, \frac{1}{s}\right]$ moves toward $B$,
 \item \underline{Phase I}: for $t \in \left(\frac{1}{s}, 1\right]$  traverses $\wideparen{BA}$,
 \item \underline{Phase IIa}: for $t \in \left(1, 1 + \frac{x_1}{s}\right]$ traverses $\wideparen{AC}$,
 \item \underline{Phase IIb}: for $t \in \left(1 + \frac{x_1}{s}, 1 + \frac{x_1 + x_2}{s}\right]$ traverses $CB$ and
 \item \underline{Phase IIc}: for $t \in \left(1 + \frac{x_1 + x_2}{s}, 1 + \frac{x_1 + x_2}{s} + \frac{x_3}{s+1}\right]$ moves toward $D$ (clockwise) till it meets Slow.
\end{itemize}
\emph{Slow} moves as follows until the exit is found:
\begin{itemize}
 \item for $t \in [0, 1]$ moves toward $C$,
 \item for $t \in (1, 1+y]$ traverses $\wideparen{CD}$,
 \item for $t \in \left(1+y, 1+y+\frac{x_3}{s+1}\right]$ traverses $\wideparen{DB}$ till it meets Fast.
\end{itemize}
\end{paragraph}

% ---------------------- THE FAST-CHORD STRATEGY FIGURE --------------------------
\begin{figure}
\begin{center}
 \includegraphics{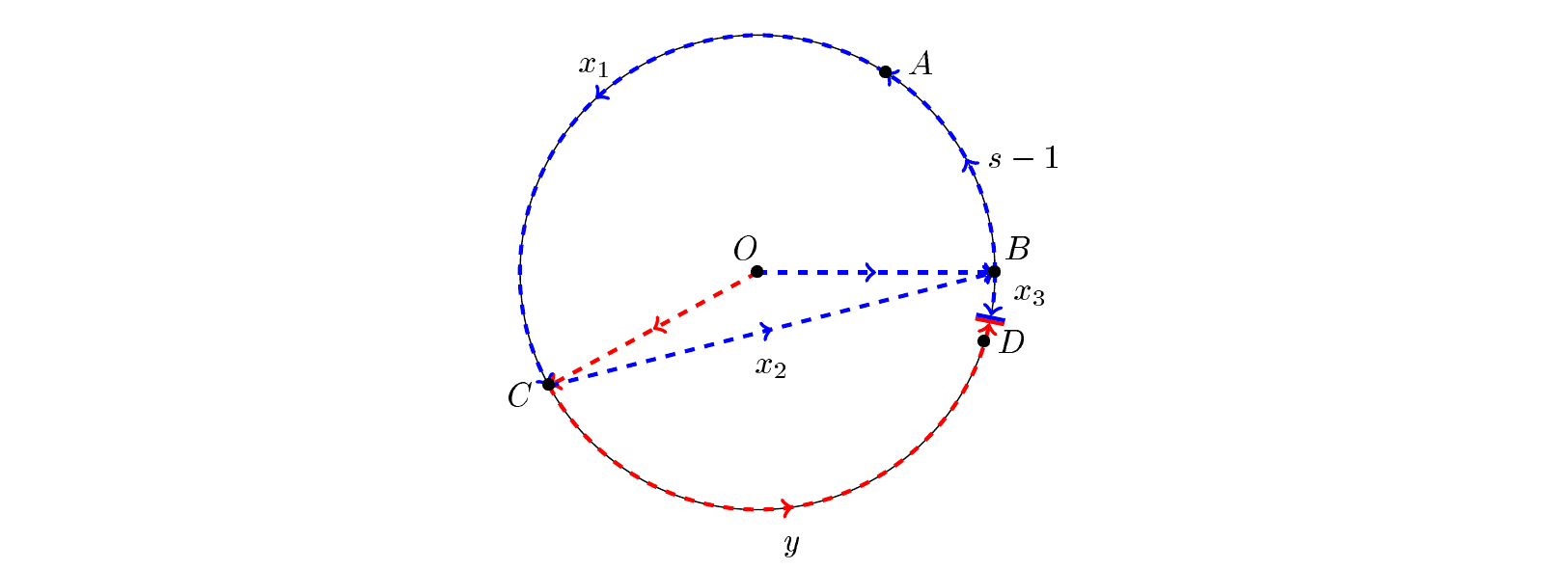}
\end{center}
\caption{The Fast-Chord Family of Strategies}
\label{fig:fast-chord}
\end{figure}

The following system of equations describes the relationship between the variable distances:

\begin{center}
$\left\{
\begin{array}{l l l}
  x_1 + y + x_3 + s - 1 &=  2\pi &\mbox{(I)}\\
  x_2 &=   2\sin\left(\frac{x_3+y}{2}\right)  &\mbox{(II)}\\
  x_1 + x_2 &= s\cdot y  &\mbox{(III)} 
\end{array}
\right.$
\end{center}

Equation (I) suggests how the disk boundary is partitioned.
Equation (II) suggests that $x_2$ is the chord of an  arc with length $x_3 + y$.
Equation (III) suggests that Fast traverses $x_1$ and $x_2$ at the same time as slow traverses $y$.
That is, since Fast lies on $A$ exactly when Slow lies on $C$, then Fast arrives at $B$ (for the second time) exactly when Slow arrives at $D$.
The latter happens at time $1+y = 1 + \frac{x_1+x_2}{s}$. The remaining $x_3$ part of the boundary can be explored in time $\frac{x_3}{s+1}$, since both robots explore it concurrently until they meet.
Hence, within $\frac{x_3}{s+1}$ time, they can explore a distance equal to $s \cdot \frac{x_3}{s+1} + \frac{x_3}{s+1} = (s+1)\cdot \frac{x_3}{s+1} = x_3$.
All variables are non-negative, since they represent distance.

The idea behind this paradigm is to try different values for $x_3$ and then solve the above system to extract $x_1, x_2$ and $y$.
Nonetheless, due to the $\sin(\cdot)$ function in equation (II), a symbolic solution is not possible to obtain.
Thence, we hereby provide bounds computed \emph{numerically}.
For any value of $s$, we iterate over all possible $x_3$ values and then solve the above system numerically.
For each $x_3$ value and for each exploration phase, we use a small time step and compute the worst-case evacuation time.
Then, we can select the $x_3$ value which minimizes this worst-case time.
All this numerical work is implemented in Matlab. 
We iterate over $x_3$ with a step of $10^{-2}$ in the interval $\left[0, 2\pi-s+1\right]$.
The upper bound for $x_3$ stems from the case $x_1 = y = 0$. 
Indeed, notice that for $s = 1$, Fast-Chord is exactly $BSP$, when we set $x_1 = y = 0$.
For the time parameter, namely $t$, we again use a step of $10^{-2}$ in the interval $\left[0, 1 + \frac{x_1+x_2}{s} + \frac{x_3}{s+1}\right]$.
We use a parametric representation of the disk (where the center $O$ lies on coordinates $(0,0)$) to calculate the \emph{Euclidean distance} between the two robots.
Below, let $Fast_x$ and $Fast_y$ stand for the $(x, y)$ coordinates of Fast's position and similarly $Slow_x$ and $Slow_y$ for Slow.
The distances between the two robots at any given time are as follows:

\begin{paragraph}{Phase I.}
 At time $t \in \left(\frac{1}{s}, 1\right]$, Fast has covered an $st - 1$ part of $\wideparen{BA}$ (until point $A'$),
 while Slow has covered a $t$ part of $OC$ (until point $C'$); see \autoref{fig:fcI}. 
 Their distance is given by applying the cosine law in $\triangle A'OC'$.
 We compute the \emph{in-triangle} angle $\measuredangle A'OC'$.
 In case that $\wideparen{A'C'} \le \pi$ (case i), then 
 $\measuredangle A'OC' = \wideparen{BC} - \wideparen{BA'} = s - 1 + x_1 - (st - 1) = s(1-t) + x_1$.
 Otherwise, if $\wideparen{A'C'} > \pi$ (case ii), then 
 $\measuredangle A'OC' = 2\pi - \wideparen{A'A} - \wideparen{AC} = 2\pi - (s - 1 - (st - 1)) - x_1 = 2\pi - s(1-t) - x_1$.
 In either case, $|A'C'| = \sqrt{|OA'|^2 + |OC'|^2 - 2|OA'||OC'|\cos(\measuredangle A'OC')} = \sqrt{1 + t^2 - 2t\cos(s(1-t)+x_1)}$, since $\cos(2\pi-x) = \cos(x)$ for any $x$.
\end{paragraph}

% Phase I figure
\begin{figure}

\begin{subfigure}{.45\textwidth}
\begin{center}
 \includegraphics{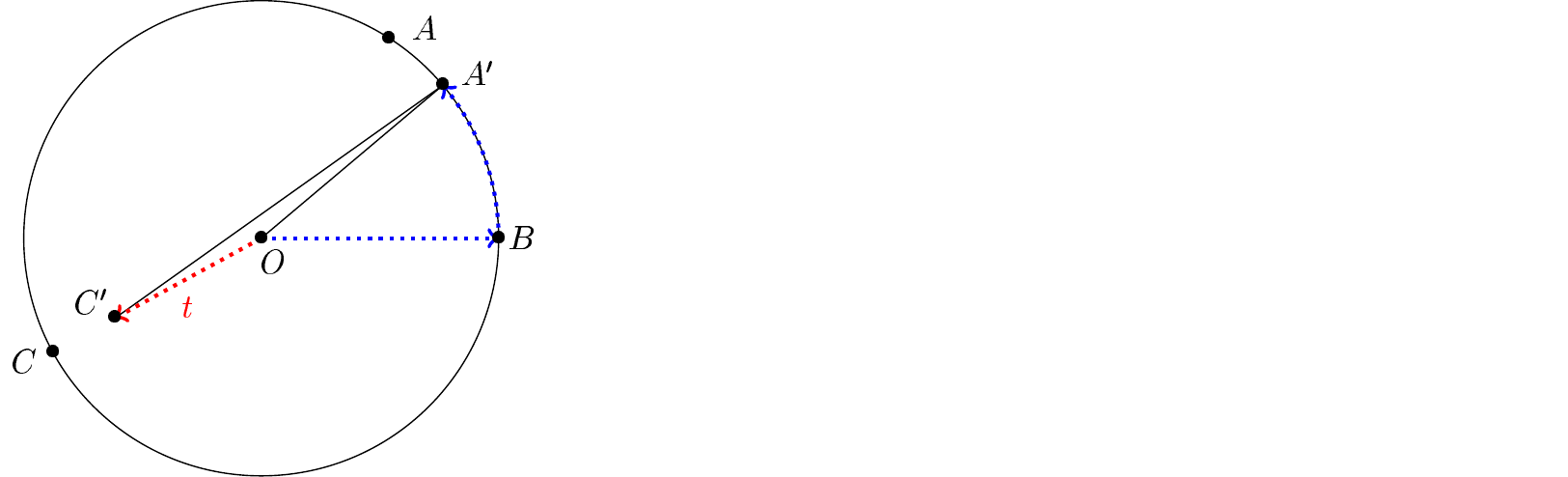}
\end{center}
\subcaption{Case (i)}
\label{fig:fcIi}
\end{subfigure}
\begin{subfigure}{.45\textwidth}
\begin{center}
 \includegraphics{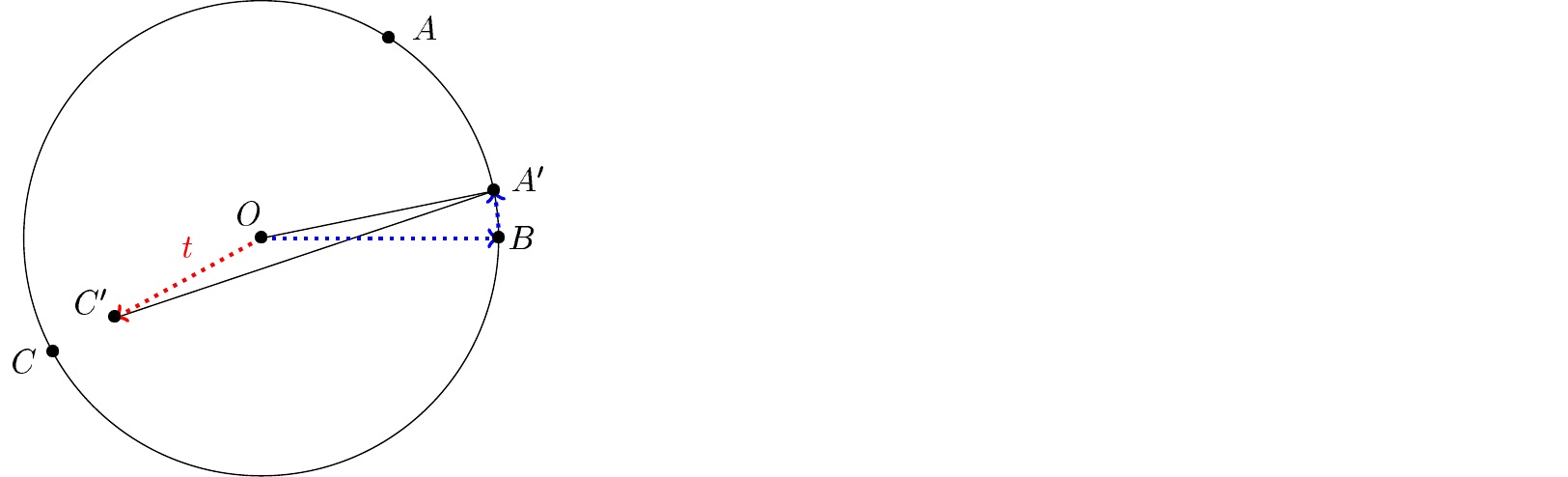}
\end{center}
\subcaption{Case (ii)}
\label{fig:fcIii}
\end{subfigure}
\caption{Fast-Chord: Phase I}
\label{fig:fcI}
\end{figure}

\begin{paragraph}{Phase IIa.}
 At time $t \in \left(1, 1+\frac{x_1}{s}\right]$, both robots are traversing their respective arcs in counterclockwise fashion.
 Their positions are the following:
 
 $$(Fast_x, Fast_y) = \left(\cos\left(s\left(t-\frac{1}{s}\right)\right), \sin\left(s\left(t-\frac{1}{s}\right)\right)\right)$$
 $$(Slow_x, Slow_y) = \left(\cos(s-1+x_1 + t - 1), \sin(s-1+x_1+t-1)\right)$$
 
 taking into account the initial timestep when they begin traversing their corresponding arcs and the starting position of Slow's arc.
 Their distance is given by $\sqrt{(Fast_x - Slow_x)^2 + (Fast_y - Slow_y)^2}$.
\end{paragraph}

\begin{paragraph}{Phase IIb.}
 While Slow continues on the same arc and so its coordinates remain the same as in Phase IIa, Fast is now traversing the $CB$ chord.
 Its corresponding position is
 $$\left(x_C + s \frac{t - 1 - \frac{x_1}{s}}{x_2} (x_B - x_C), y_C + s \frac{t - 1 - \frac{x_1}{s}}{x_2} (y_B - y_C)\right)$$
 where we take into account the direction from $C$ to $B$, the  starting point $C$, the speed and the initial time step.
 The normalization factor $x_2$ provides us with an actual distance instead of a percentage.
 The above results to $(Fast_x, Fast_y)$ being
 $$\left(\cos(s-1 + x_1) + s \frac{t - 1 - \frac{x_1}{s}}{x_2} (1 - \cos(s-1 + x_1)), \sin(s-1 + x_1) + s \frac{t - 1 - \frac{x_1}{s}}{x_2} (-\sin(s-1 + x_1))\right)$$
\end{paragraph}

\begin{paragraph}{Phase IIc.}
 Again, Slow is always on the same motion and its corresponding parametric equations do not need to change.
 Fast, on the other hand, commences a clockwise traversal on $\wideparen{BD}$ and so
 $$(Fast_x, Fast_y) = \left(\cos\left(2\pi -  s\left(t - 1 -\frac{x_1+x_2}{s}\right)\right), \sin\left(2\pi -  s\left(t - 1 -\frac{x_1+x_2}{s}\right)\right)\right)$$
 where Fast commences from position $2\pi$ on the boundary and moves clockwise with speed $s$ after time step $1 + \frac{x_1+x_2}{s}$.
\end{paragraph}

By studying the numerical bounds we obtain via the Fast-Chord method, we state the following result, in comparison to the other two strategies researched in this paper.

\begin{theorem}
 Fast-Chord performs better than (Generalized) Half-Chord for $s \in (c_{1.71}, c_{2.07})$.
 It also performs better than Both-to-the-Same-Point for $s \ge c_{1.71}$.
\end{theorem}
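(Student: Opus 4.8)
The plan is to read the statement as a pointwise comparison, over the relevant range of $s$, among three explicit upper-bound functions on the evacuation time: the Fast-Chord bound $T_{FC}(s)$, the Half-Chord bound $T_{HC}(s)$ (equal to $\frac{1+2\pi}{s}$ on $[1,2]$ and to $\frac{1+2\arccos(-2/s)}{s}+\sqrt{1-4/s^2}$ for $s\ge 2$ by \autoref{thm:fes_ub}), and the BSP bound $T_{BSP}(s)=1+2\sqrt{1-\frac{1}{(s+1)^2}}+\frac{2\arccos(-1/(s+1))-s+1}{s+1}$ from \autoref{thm:bsp}. The essential asymmetry driving the argument is that $T_{HC}$ and $T_{BSP}$ are available in closed form, whereas $T_{FC}$ is only accessible through the numerical $\min_{x_3}\max_t$ procedure described above. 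Note that the interval $(c_{1.71},c_{2.07})$ straddles $s=2$, so both pieces of the Half-Chord formula are in play.

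The observation that keeps the comparison one-sided, and hence tractable, is that we never need the exact value of $T_{FC}(s)$. Since $T_{FC}(s)=\min_{x_3}\max_t D(s,x_3,t)$, every admissible $x_3$ (i.e.\ one solving system (I)--(III)) gives $T_{FC}(s)\le \max_t D(s,x_3,t)$. Thus, to prove that Fast-Chord beats a competitor at a given $s$ it suffices to exhibit a single $x_3(s)$ whose worst-case evacuation time falls strictly below the competitor's closed-form value. First I would, for each $s$, take the numerically optimal $x_3(s)$ from the Matlab search and evaluate $\max_t D(s,x_3,t)$ using the Phase I--IIc parametric distance formulas.

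Next I would pin down the two crossover constants. I define $c_{1.71}$ as the solution in $[1,2]$ of $T_{FC}(s)=T_{BSP}(s)$ and $c_{2.07}$ as the solution near $2.07$ of $T_{FC}(s)=T_{HC}(s)$; continuity of all three functions together with the sign changes seen at the grid endpoints give existence (and local uniqueness). The crossover framing is consistent with the degenerate case $x_1=y=0$ at $s=1$, where Fast-Chord coincides with BSP, so that $T_{FC}\ge T_{BSP}$ for small $s$ before the curves meet at $c_{1.71}$. The statement then decomposes cleanly: on $(c_{1.71},c_{2.07})$ one verifies $T_{FC}(s)<T_{HC}(s)$, while for BSP one checks $T_{FC}(s)<T_{BSP}(s)$ directly on $[c_{1.71},2]$, the range where BSP is actually analyzed; for $s>2$ the BSP comparison is not binding, since Half-Chord already dominates BSP there (as remarked before \autoref{thm:bsp}).

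The main obstacle is upgrading a finite numerical evaluation to a claim valid for every real $s$ in an open interval, not merely at sampled grid points. Two discretizations must be controlled. The $10^{-2}$ grid in $x_3$ is harmless: it affects only the upper-bound direction, and any feasible $x_3$ yields a legitimate upper bound on $T_{FC}$. The $10^{-2}$ grid in the exit time $t$ is the delicate one, because sampling $\max_t D$ at discrete $t$ underestimates the true maximum. To make $\max_t D$ a rigorous upper bound I would use that $D(s,x_3,\cdot)$ is Lipschitz in $t$ with constant $L\le s+1$ (the combined robot speed), so the sampled maximum inflated by $L\delta/2$, with grid spacing $\delta=10^{-2}$, dominates the true one; I would then check that this inflated estimate, together with an analogous Lipschitz slack in $s$, still lies strictly below the competitor's value. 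Away from the two crossover points the gap between the relevant curves is bounded below by a positive constant, so a sufficiently fine grid closes the argument; only thin neighborhoods of $c_{1.71}$ and $c_{2.07}$ need care, and there the inequalities correctly degrade to equality exactly at the crossover, matching the definition of the constants.
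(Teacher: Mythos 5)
Your proposal is, at its core, the same route as the paper's — which is worth stating plainly: the paper offers no formal proof of this theorem at all. The statement is presented as a summary of the Matlab computation described just before it (iterate over $x_3$ and over $t$ with step $10^{-2}$, solve the system (I)--(III) numerically, take the best $x_3$, and compare the resulting curve against the closed forms of Theorem~\ref{thm:fes_ub} and Theorem~\ref{thm:bsp} in the comparison figures). What you add on top of this shared numerical core is a certification layer that the paper lacks: the one-sidedness observation (any feasible $x_3$ already upper-bounds $T_{FC}$, so the $x_3$-grid can only hurt, never flatter, Fast-Chord), the Lipschitz inflation of the sampled $t$-maximum, and a Lipschitz argument in $s$ to pass from grid points to whole intervals. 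These are exactly the ingredients needed to turn the paper's numerical observation into an actual proof, so on the Half-Chord claim over $(c_{1.71},c_{2.07})$ and on the BSP claim restricted to $[c_{1.71},2]$ your plan is sound. Two small repairs: the quantity whose maximum over $t$ matters is the evacuation time $t+D(t)$, not the distance $D(t)$, so the relevant Lipschitz constant is $s+2$ rather than $s+1$; and ``local uniqueness'' of the crossover points does not follow from continuity plus a sign change — you need (numerically certified) strict monotonicity of the difference near each crossing. Also note that ``performs better'' requires the competitors' closed forms to be their actual worst-case times, not merely upper bounds; this holds because the paper's analyses exhibit exits attaining them (e.g.\ exit at $A$ in Phase II of Half-Chord), but it should be said.

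The genuine gap is your treatment of the BSP claim for $s>2$. The theorem asserts Fast-Chord beats BSP for \emph{all} $s\ge c_{1.71}$, and you dismiss $s>2$ as ``not binding'' because Half-Chord dominates BSP there. That gives the chain $T_{FC}<T_{HC}<T_{BSP}$ only while $T_{FC}<T_{HC}$, i.e.\ on $(c_{1.71},c_{2.07})$; for $s\ge c_{2.07}$ the first link is reversed — by your own definition of $c_{2.07}$, Half-Chord is at least as good as Fast-Chord there — so nothing in your proposal establishes $T_{FC}<T_{BSP}$ on $[c_{2.07},\infty)$, where the theorem still claims it. Closing this requires either a direct certified comparison of the Fast-Chord bound against BSP on that range — which first forces you to decide what BSP's worst-case time even is for $s>2$, since Theorem~\ref{thm:bsp} and Lemma~\ref{thm:bsp2} are proved only for $s\in[1,2]$ and the formula's extension beyond $2$ is not justified by the paper — or a separate lower bound on BSP's performance for $s>2$ that dominates the Fast-Chord curve. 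The same fuzziness exists in the paper itself, but since your stated goal is a rigorous proof of the theorem as written, this range cannot simply be waved off.
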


% ---------------------------------------------------------------------

%%%%%%%%%%%%%%%%%%%%%% [\THE NEW UPPER BOUND] %%%%%%%%%%%%%%%%%%%%%%%%%%%%%%%%%%%%%%%%%%%

\section{Lower Bounds}\label{sec:lower}

The main tool behind our lower bounds is the following lemma from \cite{s1}.

\begin{lemma}[Lemma 5 \cite{s1}]\label{5}     
 Consider a boundary of a disk whose subset of total length $u+\epsilon > 0$ has not been explored for some $\epsilon > 0 $ and
 $\pi \ge u > 0$. Then there exist two unexplored boundary points between which the distance along the boundary is at least $u$.
\end{lemma}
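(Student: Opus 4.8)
The plan is to argue by contradiction. Parametrise the boundary by arc length as the circle $\mathbb{R}/2\pi\mathbb{Z}$ (circumference $2\pi$), write $U$ for the measurable unexplored set, so $|U| = u+\epsilon$, and recall that the geodesic (shorter-arc) distance between two points is at most $\pi$. Since $u \le \pi$, two points $x$ and $x+v$ are at boundary distance $\ge u$ exactly when the separation $v$ lies in the window $[u, 2\pi - u]$ (both arcs are then $\ge u$). Thus it suffices to produce some $v \in [u, 2\pi - u]$ and a point $x$ with $x, x+v \in U$; equivalently, the difference set $U - U$ must meet $[u, 2\pi - u]$. I would assume the contrary, that every pair of unexplored points is at distance strictly less than $u$ (so $U - U$ avoids the window), and derive $|U| \le u$, contradicting $|U| = u+\epsilon$.

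First I would dispose of large unexplored sets. If $|U| > \pi$, then $U$ and its antipodal copy $U + \pi$ together have measure $2|U| > 2\pi$, so they overlap; any $x \in U \cap (U+\pi)$ gives the antipodal pair $x, x+\pi$ at distance exactly $\pi \ge u$, and we are done. Hence I may assume $|U| \le \pi$. Next take the smallest closed arc $W$ enclosing $U$, of length $L$: its two endpoints are accumulation points of $U$, so $U$ has points arbitrarily close to each, and $L \ge |U| = u+\epsilon$. If $L \le 2\pi - u$, those two extreme points are at geodesic distance $\min(L, 2\pi - L)$, which is $\ge u$ (it equals $L \ge u+\epsilon$ when $L \le \pi$, and equals $2\pi - L \ge u$ when $\pi < L \le 2\pi - u$); this again exhibits the required far pair.

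The remaining, and genuinely delicate, case is when $U$ nearly surrounds the disk: $|U| \le \pi$ but $L > 2\pi - u$, so the largest $U$-free gap has length $g = 2\pi - L < u$. Now the two endpoints of $W$ are only $g < u$ apart and the naive argument fails; this is the \emph{main obstacle}. Note that any point of $U$ forces $U$ into its geodesic ball of radius $u$ (an arc of length $2u$), so $L \le 2u$, and combining with $L > 2\pi - u$ shows this regime only occurs for $u > 2\pi/3$ — exactly when a set can wind around most of the circle while keeping all pairwise distances below $u$. My plan is to exploit the windowed-difference constraint from both ends of the gap: examining unexplored points just inside each end, the requirement that their differences avoid $[u, 2\pi - u]$ confines all of $U$ to two short sub-arcs hugging the two gap-ends. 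A direct measure count of these two arcs (tightened, if needed, by a further round of the same confinement) should force $|U| \le u$, the desired contradiction. The checks $u = \pi$ (no antipodal pair, whence $|U| \le \pi = u$) and $U$ a single arc (length $< u$) both saturate at $|U| = u$, which is why $|U| \le u$ is the correct target and why the packing in this last case is tight and the crux of the whole argument.
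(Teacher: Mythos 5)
You should know at the outset that the paper itself never proves this statement: it is imported verbatim from reference~[s1] (its Lemma~5), so your argument has to stand entirely on its own. Your reduction (assume all pairwise boundary distances are $<u$ and derive $|U|\le u$), your case $|U|>\pi$ (antipodal overlap), and your case $L\le 2\pi-u$ (extreme points of the smallest enclosing arc) are sound, modulo routine care with accumulation points. The genuine gap is exactly where you place the crux: in the case $2\pi-u<L\le 2u$ you offer a plan, not a proof, and the plan as described fails. Concretely, put the gap at $(L,2\pi)$, so $U\subseteq[0,L]$ with points of $U$ arbitrarily close to $0$ and to $L$, and write $g=2\pi-L<u$. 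The constraint from a point of $U$ near $0$ confines $U$ to $[0,u)\cup(2\pi-u,L]$, and the constraint from a point near $L$ confines $U$ to $[0,u-g)\cup(L-u,L]$ (up to $\delta$'s). Intersecting gives \emph{three} disjoint arcs, not two: $[0,u-g)\cup(L-u,u)\cup(2\pi-u,L]$. The middle arc $(L-u,u)$ is nonempty precisely when $L<2u$, i.e.\ throughout this regime, and nothing excludes points there: such a point is within distance $u$ of both gap ends, and the two gap-end clusters are themselves only $g<u$ apart. The union has total length $4u-g-2\pi$, and since $g$ can be as small as $2\pi-2u$, this can reach $6u-4\pi$, which exceeds $u$ whenever $u>4\pi/5$ (e.g.\ for $u=0.95\pi$ the confinement still allows measure about $1.7\pi$). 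So one round of confinement yields no contradiction near $u=\pi$; a further round raises the threshold only to another constant strictly below $\pi$, because every round is driven by the forbidden window $[u,2\pi-u]$, whose width $2\pi-2u$ shrinks to $0$ as $u\to\pi$. Completing your approach therefore requires an induction of unbounded depth, with bookkeeping among three (not two) interacting regions, neither of which you have set up.

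There is a short way to close the hole that needs no case analysis at all and is uniform in $0<u\le\pi$. Under your contradiction hypothesis, for every shift $t\in[u,2\pi-u]$ we have $U\cap(U+t)=\emptyset$, since two points differing by $t$ are at boundary distance $\min(t,2\pi-t)\ge u$. Moreover, for each $t\in(0,u)$ the sets $U+t$ and $U+(t-u)$ are disjoint, because any common point would produce two points of $U$ differing by exactly $u$, at boundary distance $\min(u,2\pi-u)=u$; hence $|U\cap(U+t)|+|U\cap(U+(t-u))|\le|U|$. Integrating this inequality over $t\in(0,u)$, the left-hand side is $\int_{-u}^{u}|U\cap(U+t)|\,dt$, which equals $\int_{0}^{2\pi}|U\cap(U+t)|\,dt=|U|^{2}$ by Fubini, since the integrand vanishes for $t\in[u,2\pi-u]$; the right-hand side is $u|U|$. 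Thus $|U|\le u$, contradicting $|U|=u+\epsilon$. I would recommend replacing the delicate third case (indeed, the whole case analysis) with this averaging argument, keeping only your observation that distance at least $u$ is equivalent to the shift lying in $[u,2\pi-u]$.
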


\subsection{Fast Explores}

\begin{lemma}\label{thm:fes_lb}
 Any $FES$-strategy takes at least
 \begin{itemize}
  \item $\frac{1+2\pi}{s}$ time for any $s \in [1, 2]$ and
  \item $\frac{1+2\arccos\left(-\frac{2}{s}\right)}{s} + \sqrt{1 - \frac{4}{s^2}}$ time for any $s \ge 2$.
 \end{itemize}

\end{lemma}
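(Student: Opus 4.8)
The plan is to treat the two speed regimes separately, exploiting only the defining feature of an $FES$ strategy: \emph{Fast} alone searches the boundary, while \emph{Slow} follows a trajectory $S(t)$ that is fixed \emph{before} the exit is revealed (wireless notification fires only upon discovery), so $S(t)$ is one and the same curve across all hypothetical exit placements. Writing $\tau(E)$ for the time Fast first visits a boundary point $E$, the Remark gives that, for each exit $E$, the evacuation time of that placement is $\tau(E) + |S(\tau(E))\,E|$. Since the strategy's worst-case time $T$ dominates every placement, I obtain the family of inequalities $\tau(E) + |S(\tau(E))\,E| \le T$, which will drive both bounds.

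For $s \in [1,2]$ the bound is essentially immediate: Fast cannot touch the boundary before time $\tfrac1s$, and thereafter it covers boundary arc at rate at most $s$, so covering all $2\pi$ of it costs at least $\tfrac{2\pi}{s}$ more. Taking the exit to be the last point Fast explores forces $T \ge \tfrac1s + \tfrac{2\pi}{s} = \tfrac{1+2\pi}{s}$.

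For $s \ge 2$, set $\alpha = \arccos(-\tfrac2s)$ (so $\alpha \in (\tfrac\pi2,\pi]$ and $\sin\alpha = \sqrt{1-\tfrac4{s^2}}$) and $t_0 = \tfrac{1+2\alpha}{s}$. In the window $[\tfrac1s, t_0]$, of length $\tfrac{2\alpha}{s}$, Fast traverses at most $2\alpha$ of arc, so at time $t_0$ at least $2\pi - 2\alpha$ of the boundary remains unexplored, and $0 \le 2\pi-2\alpha \le \pi$. Applying \autoref{5} (for arbitrarily small $\epsilon>0$, with $u = 2\pi-2\alpha-\epsilon \le \pi$) yields two unexplored points $P,Q$ whose boundary distance is at least $2\pi-2\alpha$; their chord, namely $2\sin$ of half that distance, is therefore at least $2\sin(\pi-\alpha)=2\sin\alpha$ in the limit $\epsilon\to0$. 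Both $P$ and $Q$ are explored after $t_0$; assume $\tau(P)\le\tau(Q)$. Chaining the triangle inequality along the single Slow curve gives $|PQ| \le |P\,S(\tau(P))| + |S(\tau(P))\,S(\tau(Q))| + |S(\tau(Q))\,Q|$, and I bound the three terms by $T-\tau(P)$, $\tau(Q)-\tau(P)$ (Slow has speed $1$), and $T-\tau(Q)$ respectively, using the displayed inequality for $E=P$ and $E=Q$. These telescope to $|PQ| \le 2T-2\tau(P) \le 2T-2t_0$, and combining with $|PQ|\ge 2\sin\alpha$ yields $T \ge t_0 + \sin\alpha = \tfrac{1+2\arccos(-2/s)}{s} + \sqrt{1-\tfrac4{s^2}}$.

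The main obstacle is not the geometry but the bookkeeping that legitimizes the triangle-inequality step: one must argue that $S(\tau(P))$ and $S(\tau(Q))$ lie on a common curve (Slow's motion is independent of the exit until discovery) and that the single worst-case value $T$ may be invoked simultaneously for the two \emph{distinct} hypothetical placements $E=P$ and $E=Q$. The second delicate point is upgrading ``the unexplored set has large measure'' to ``two unexplored points are far apart,'' which is precisely what \autoref{5} supplies, and which is why the cutoff $\alpha=\arccos(-2/s)$ is chosen so that $2\pi-2\alpha\le\pi$ stays within the lemma's range; the passage $\epsilon\to0$ then recovers the closed-form bound. Finally, one should verify consistency at $s=2$, where $\alpha=\pi$, $\sin\alpha=0$, and both expressions collapse to $\tfrac{1+2\pi}{s}$.
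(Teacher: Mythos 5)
Your proof is correct, and its skeleton matches the paper's: the same split into the two speed regimes, the same ``exit just before the last explored point'' argument giving $\frac{1+2\pi}{s}$, and for $s\ge 2$ the same invocation of Lemma~\ref{5} at time roughly $\frac{1+2\arccos(-2/s)}{s}$ to produce two unexplored points whose chord is (in the limit) $2\sqrt{1-4/s^2}$, forcing Slow to pay an extra half-chord. Where you genuinely diverge is the mechanism that extracts the half-chord penalty. The paper argues at a single instant: Slow's position $X$ at time $t^\ast=\frac{1+a-\epsilon}{s}$ is exit-independent, and by the perpendicular-bisector observation one of the two unexplored endpoints lies at distance at least half the chord from $X$, so the adversary places the exit there and directly gets $T\ge t^\ast+\sin(a/2)$. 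You instead chain a telescoping triangle inequality across the two hypothetical placements $E=P$ and $E=Q$, invoking the worst-case bound $T$ once for each placement together with Slow's unit-speed bound on $[\tau(P),\tau(Q)]$, arriving at $|PQ|\le 2T-2\tau(P)$; this avoids the geometric ``farther endpoint'' step but requires the two-execution bookkeeping (common nominal trajectory before discovery) that you correctly flag as the delicate point. A second, smaller difference: the paper keeps the exploration amount as a free parameter $a\in[\pi,2\pi)$ and obtains $a=2\arccos(-2/s)$ by maximizing $\frac{1+a}{s}+\sin(a/2)$, which explains where the constant comes from, whereas you posit the critical time up front and verify it; both are valid, the paper's yielding the optimizer and yours hiding that calculus. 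Your consistency check at $s=2$, where the chord degenerates and the two bullets coincide, correctly handles the boundary case in which Lemma~\ref{5} (which needs $u>0$) cannot be applied.
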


\begin{proof}
 
 To start with, any $FES$ strategy takes at least $\frac{1 + 2\pi}{s} - \epsilon$ time (for $\epsilon > 0$), 
 since an adversary may choose to place the exit at a point just before the time when Fast finishes exploring the whole boundary.
 This yields a lower bound of $\lim\limits_{\epsilon \rightarrow 0} \left(\frac{1 + 2\pi}{s} - \epsilon\right) = \frac{1 + 2\pi}{s}$ for any $s \ge 1$.
   
 We now show a better bound for $s \ge 2$.
 At time  $\frac{1+a}{s}$ (where $a \ge 0$), Fast has explored at most an $a$ part of the boundary.
 Then, if we consider the time  $\frac{1+a- \epsilon}{s} $ (where $\epsilon > 0$), a $2\pi - (a - \epsilon) = 2\pi - a + \epsilon$ subset of the boundary has not yet been explored. 
 We bound $a \in [\pi, 2\pi)$ such that $0 < 2\pi - a \le \pi$ holds.
 We now apply Lemma \ref{5} with $u = 2\pi - a$ and $\epsilon$.
 Thence, there exist two unexplored boundary points between which the distance along the boundary is at least $u$.
 Let us now consider the perpendicular bisector of the chord connecting these two points.
 Depending on which side of the bisector Slow lies, an adversary may place the exit on the boundary point lying at the opposite side.
 The best case for Slow is to lie exactly on the point of the bisection.
 That is, Slow will have to cover a distance of at least $\frac{2\sin\left(\frac{u}{2}\right)}{2} = \sin\left(\frac{u}{2}\right) = \sin\left(\frac{2\pi - a}{2}\right) = \sin\left(\pi - \frac{a}{2}\right) = \sin\left(\frac{a}{2}\right)$, where $2\sin\left(\frac{u}{2}\right)$ is the chord length.
 In this case, the overall evacuation time is equal to $\frac{1+a}{s} + \sin\left(\frac{a}{2}\right)$ and for the best lower bound we compute
 $ \max\limits_{\pi \le a < 2\pi} \left\{\frac{1+a}{s} + \sin\left(\frac{a}{2}\right)\right\} $.
 The rest of the proof reduces to computing the maximum of this function with respect to $a$.
 %The first partial derivative is equal to:
 %$$\frac{\partial f(s,a)}{\partial a} = \frac{1}{s} + \frac{1}{2}\cos\left(\frac{a}{2}\right)$$
 %and the family of solutions to $\frac{\partial f(s,a)}{\partial a} = 0$ is of the form:
 %$$ \left\{ 4\pi n \pm 2\arccos\left(-\frac{2}{s}\right): n \in \mathbb{Z} \right\} $$
 %The only solution which is included in the interval $[\pi, 2\pi)$ is
 %$$ a' = 2\arccos\left(-\frac{2}{s}\right) $$ 
 %and it is defined \emph{only for} $s \ge 2$.
 %Moreover, $a'$ is a local maximum, since $\frac{\partial f(s, a)}{\partial a} \mid_{a = a'} < 0$ for any $s \ge 2$.
 %It then suffices to compare $f(s, a')$ to $f(s, \pi)$ and $f(s, 2\pi)$ to prove global optimality.
 %The resulting lower bound is 
 %$$ f(s, a') = \frac{1+2\arccos\left(-\frac{2}{s}\right)}{s} + \sqrt{1 - \frac{4}{s^2}}$$
 Finally, notice that the latter bound is equal to $\frac{1 + 2\pi}{s}$ for $s = 2$ and greater than $\frac{1 + 2\pi}{s}$ for $s > 2$.
\end{proof}

\subsection{Both Explore}

\begin{lemma}\label{thm:bes}
 Any $BES$-strategy takes at least
 \begin{itemize}
  \item $1 + \frac{2}{s}\sqrt{1-\frac{s^2}{(s+1)^2}} + \frac{-s+2\arccos\left(-\frac{s}{s+1}\right) + 1}{s+1}$ time for $s \in [1, 2)$,
  \item $1 + \sqrt{1 - \frac{4}{(s+1)^2}} + \frac{-s + 2\arccos\left(-\frac{2}{s+1}\right)+1}{s+1}$ for $s \in [2, c_{4.84}]$ (where $c_{4.84} \approx 4.8406$) and
  \item $1 + \sin\left(\frac{s-1}{2}\right)$ time for $s \in (c_{4.84}, 2\pi+1)$.
 \end{itemize}
\end{lemma}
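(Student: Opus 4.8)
The plan is to run an adversary argument in the spirit of \autoref{thm:fes_lb}, but now exploiting that a $BES$-strategy forces \emph{Slow} onto the boundary as well. First I would record the exploration budget: since Slow needs time $1$ just to reach the boundary, it contributes nothing before time $1$, while Fast explores an arc of length at most $s-1$ by then; for $t\ge 1$ the two together explore at a combined rate of at most $s+1$. Hence at any time $1+\tau$ the explored portion has length at most $(s-1)+(s+1)\tau$, so an arc of length at least $2\pi-(s-1)-(s+1)\tau$ is still unexplored. This is exactly where the leading ``$1$'' in each bound comes from, and it is the only place the $BES$ hypothesis (both robots explore) enters.

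Next I would freeze the execution at a revelation time $1+\tau$ and feed the unexplored length into \autoref{5}, producing two still-unexplored points $P,Q$ at boundary-distance $u$ and hence Euclidean distance $2\sin(u/2)$; these are the adversary's two candidate exits, with $u=2\pi-(s-1)-(s+1)\tau$ the free parameter (write $u=2\beta$). The heart of the proof is to bound, against \emph{arbitrary} trajectories, the extra time needed once the candidates are separated, and I would extract two complementary bounds. The first is a perpendicular-bisector bound: whichever side of the bisector of $PQ$ contains Slow, the adversary puts the exit on the far point, so Slow is at least the half-chord $\sin\beta$ away and must still spend $\ge \sin\beta$ time; this yields $1+\tfrac{2\pi-2\beta-(s-1)}{s+1}+\sin\beta$. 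The second is a full-chord bound: the robots cannot evacuate without some robot bridging the whole separation $2\sin\beta$ between the two candidates, and the cheapest crosser is Fast, costing $\tfrac{2\sin\beta}{s}$; this yields $1+\tfrac{2\pi-2\beta-(s-1)}{s+1}+\tfrac{2\sin\beta}{s}$.

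I would then optimize each expression over $\beta$. Stationarity of the half-chord bound gives $\cos\beta=\tfrac{2}{s+1}$, i.e.\ $\beta=\arccos\!\big(\tfrac{2}{s+1}\big)$, and substituting $\pi-\beta=\arccos(-\tfrac{2}{s+1})$ reproduces the middle expression; stationarity of the full-chord bound gives $\cos\beta=\tfrac{s}{s+1}$, reproducing the first expression. Since $\tfrac{2}{s}=1$ exactly at $s=2$, the two optimized bounds agree there, the full-chord one is larger for $s<2$ and the half-chord one for $s\ge 2$; taking the better bound in each regime gives the first two cases. Finally I would check feasibility of the optimizing $\tau$: the half-chord optimizer needs $\tau\ge 0$, i.e.\ $\arccos(-\tfrac{2}{s+1})\ge\tfrac{s-1}{2}$, which fails precisely when $\cos\tfrac{s-1}{2}<-\tfrac{2}{s+1}$, i.e.\ beyond the root $c_{4.84}$ of $s-1=2\arccos(-\tfrac{2}{s+1})$. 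For $s>c_{4.84}$ the constraint $\tau\ge 0$ binds, so the bound is evaluated at $\tau=0$ (Slow just on the boundary, Fast having explored $s-1$), where $u=2\pi-(s-1)$ and $\sin(u/2)=\sin\tfrac{s-1}{2}$, giving the third case $1+\sin\tfrac{s-1}{2}$.

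The main obstacle I anticipate is the full-chord bound: justifying rigorously, for \emph{every} $BES$-strategy rather than just $BSP$-like ones, that the robots are genuinely forced to move one of them across the entire separation of the two surviving candidate exits, and that in the minimizing arrangement the unavoidable crosser can be taken to be Fast (so the cost is $\tfrac{2\sin\beta}{s}$ and not merely the half-chord). This requires arguing that up to the moment the exit is found the strategy is effectively non-adaptive and must keep both candidates alive, together with careful handling of which robot finds which point; the remaining work (verifying the stationary points are maxima on the feasible $\beta$-range, the $s=2$ crossover, and the $c_{4.84}$ cutoff) is routine calculus once these bounds are in place.
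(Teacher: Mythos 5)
Your proposal is correct and is essentially the paper's own argument: the same exploration budget $(s-1)+(s+1)\tau$, the same invocation of Lemma~\ref{5} to obtain two unexplored candidate exits, the same two adversary bounds (a robot forced across the full chord, costing at least $\frac{2}{s}\sin\beta$, versus Slow being at least a half-chord $\sin\beta$ from the far candidate) with crossover at $s=2$, and the same optimization in which the binding constraint $\tau\ge 0$ produces the $1+\sin\left(\frac{s-1}{2}\right)$ bound beyond $c_{4.84}$. The only (cosmetic, and if anything cleaner) difference is that you state both chord bounds unconditionally, whereas the paper phrases them as cases I--III and assumes the robots ``avoid'' case I; the obstacle you anticipate about the forced crosser is already settled by your own remark that whichever robot first visits a candidate must then cross to the other at speed at most $s$.
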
     

\begin{proof}
 At time $1$, Fast has explored at most $s-1$ distance on the boundary, since it needs $\frac{1}{s}$ time to reach the boundary and in the remaining $\frac{s-1}{s}$ time it can traverse $s \frac{s-1}{s} = s-1$ distance.
 At time $1+y$, where $y \ge 0$ is a variable, Fast has explored at most an $s-1 + sy$ part of the boundary and Slow has explored at most a $y$ part of the boundary.
 We derive an upper bound for the variable $y$ by noticing that the whole explored part can be strictly less than $2\pi$ (otherwise the exit has already been found):
 $s - 1 + (s+1)y < 2\pi \Rightarrow y < \frac{2\pi - s + 1}{s+1}$.
 Then, the unexplored part is strictly greater than $2\pi - s + 1 - (s+1)y$.
 Notice that we need $s < 2\pi + 1$, otherwise we get $y < 0$ which contradicts the $y \ge 0$ initial statement.
 We let $u = 2\pi - s + 1 - (s+1)y$, where $u$ is the quantity from Lemma \ref{5}.
 We apply the restriction that $u = 2\pi - s + 1 - (s+1)y \le \pi$, which holds for $y \ge \frac{\pi - s + 1}{s+1}$.
 Moreover, $u = 2\pi - s + 1 - (s+1)y > 0$ holds for any $s \ge 1$ given that $y < \frac{2\pi - s + 1}{s+1}$.
 
 Now, let us apply Lemma \ref{5}:
 There exist two unexplored points with arc distance $\ge 2\pi - s + 1 - (s+1)y$, which implies that the chord between them
 has length at least $2\sin\left(\frac{2\pi - s - (s+1)y + 1}{2}\right) = 2\sin\left(\frac{s + (s+1)y - 1}{2}\right)$.
 An adversary can put the exit on any of the two endpoints.
 If Slow reaches an endpoint first (case I), then the exit is placed on the other side, such that Slow has to traverse the chord.
 If Fast reaches an endpoint first, then the exit is placed either on the other side (case II), meaning that Fast has to traverse the chord,
 or on the endpoint that lies the farthest from Slow current position (case III), meaning that Slow has to traverse at least half the chord.
 We assume that both the robots and the adversary behave optimally. Hence, the robots will always avoid case I.
 Then, the adversary will apply case II, for $s \in [1, 2)$, and III for $s \ge 2$.
 Let $y_{min} = \max\{0, \frac{\pi-s+1}{s+1}\}$ and $y_{max} = \frac{2\pi-s+1}{s+1}$.
 Totally, the worst-case evacuation time is given by
 \begin{itemize}
  \item $\max_{y \in \left[y_{min}, y_{max}\right)} \left\{1+y+\frac{2}{s}\sin\left(\frac{s + (s+1)y - 1}{2}\right)\right\}$, when in case II and
  \item $\max_{y \in \left[y_{min}, y_{max}\right)} \left\{1+y+\sin\left(\frac{s + (s+1)y - 1}{2}\right)\right\}$, when in case III.
 \end{itemize}
 The rest of the proof reduces to computing the maximum of these functions, with respect to $y$.
 %Initially, we compute the first partial derivative:
 %$$\frac{\partial f(s,y)}{\partial y} = 1+ (s+1)\cos((s + (s+1)y - 1)/2) $$
 % The family of solutions to $\frac{\partial f(s,y)}{\partial y} = 0$ is of the form:
 % $$ \left\{ \frac{4\pi n - s \pm 2\arccos(1/(-s-1)) + 1}{s+1}: n \in \mathbb{Z} \right\} $$
 % The only solution which is included in the interval $[y_{min}, y_{max})$ is
 %The maximum is attained onto the derivative root
 %$ y' = \frac{-s + 2\arccos(1/(-s-1)) + 1}{s+1} $
 %\emph{only for} $1 \le s \le 4.5068$
% Notice that  $\frac{\partial^2 f(s, y)}{\partial y} \mid_{y = y'} < 0$ for any $s \ge 1$.
% So far, we showed that $y'$ is the point of a local maximum.
% It suffices to show that $f(s, y') \ge f(s, y_{min})$ and $f(s, y') \ge f(s, y_{max})$ to prove global optimality.
% The resulting lower bound is $1 + \frac{2\sqrt{2s+1}}{s(s+1)} + \frac{-s+2\arccos\left(-\frac{s}{s+1}\right) + 1}{s+1}$, for $s \le 2$
% , $1 + \sqrt{1 - \frac{4}{(s+1)^2}} + \frac{-s + 2\arccos\left(-\frac{2}{s+1}\right)+1}{s+1}$ for $s \in [2, 4.8406]$ and $1 + \sin\left(\frac{s-1}{2}\right)$ time for $s \in (4.8406, 2\pi+1)$.
\end{proof}

\begin{comment}
One may observe that the above lower bound, although it is very general and strong for small values of $s$, loses its value as $s$ grows, since it diminishes to $1$.
This happens due to the fact that in the proof we consider only a specific part of a both-explore strategy, where Slow explores some part of the boundary and then it traverses a chord to reach the exit.
This chord length grows smaller and smaller for bigger values of $s$.
We understand that there is a need to capture a lower bound for the case where the slow robot has not explored any part of the boundary yet.
This is possible via \autoref{thm:fes_lb}, since we can apply the same lower bound idea also to the both-explore strategies when $s$ is big enough.

\begin{lemma}\label{thm:antipodal}
 Any $BES$-strategy takes at least $\frac{1+2\arccos\left(-\frac{2}{s}\right)}{s} + \sqrt{1 - \frac{4}{s^2}}$ time for $s \ge \pi+1$.
\end{lemma}

\begin{proof}
 One need only notice that, for $a = s-1 \ge \pi$, at time  $\frac{1 + a -\epsilon}{s}$, a $2\pi - a + \epsilon$ part of the boundary is yet unexplored, where  $2\pi - a \le \pi$.
 Moreover, the slow robot has not reached the boundary yet.
 Hence, we can see this as a fast-explores case and follow the proof of \autoref{thm:fes_lb}.
\end{proof}
\end{comment}

One may observe that the above lower bound, although it is quite strong for small values of $s$, loses its value as $s$ grows.
This happens due to the fact that in the proof we consider only a specific moment of a both-explore strategy, where both robots have already explored some part of the boundary.
Thus, there is a need to capture a lower bound for the case where Slow has not explored any part of the boundary yet.
This is possible, since we can apply a similar fast-explores lower bound idea also to the both-explore strategies when $s$ is big enough.

\begin{lemma}\label{thm:antipodal}
 Any $BES$-strategy takes at least
 \begin{itemize}
  \item $1 + \sin\left(\frac{s-1}{2}\right)$ time for $s \in (\pi+1, c_{4.97})$, where $c_{4.97} \approx 4.9699$, and
  \item $\frac{1+2\arccos\left(-\frac{2}{s}\right)}{s} + \sqrt{1 - \frac{4}{s^2}}$ time for $s \ge c_{4.97}$.
 \end{itemize}
\end{lemma}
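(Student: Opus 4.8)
The plan is to reuse the fast-explores adversary argument of \autoref{thm:fes_lb}, but to restrict it to the time window \emph{before} Slow is able to touch the boundary, which is exactly what separates the $BES$ setting from the pure $FES$ one. Since Slow has speed $1$ and starts at the center, at any instant $t<1$ it lies strictly inside the disk and has explored no boundary point whatsoever. Thus for any $a$ with $\pi \le a \le s-1$ and any small $\epsilon>0$, I would look at the moment $t_0 = \frac{1+a-\epsilon}{s} < 1$: here Fast has explored at most $s t_0 - 1 = a-\epsilon$ of the boundary, so at least a $2\pi - a + \epsilon$ portion remains unexplored, while Slow still sits in the interior. The nonemptiness of the window $\pi \le a \le s-1$ forces $s>\pi+1$, which is precisely the regime of the statement.

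Next I would invoke Lemma~\ref{5} with $u = 2\pi - a$; its hypotheses $0 < u \le \pi$ translate to $\pi \le a < 2\pi$, which holds throughout. This yields two still-unexplored boundary points whose separating chord has length at least $2\sin(u/2)$. Placing the exit adversarially at whichever of the two points is farther from Slow's position at $t_0$ forces Slow to travel at least half the chord: considering the perpendicular bisector (equivalently, by the triangle inequality $\max(|XP|,|XQ|)\ge \frac{|XP|+|XQ|}{2}\ge \frac{|PQ|}{2}$), the farther endpoint is at distance at least $\sin(u/2) = \sin\!\big(\tfrac{2\pi-a}{2}\big) = \sin(a/2)$ from Slow. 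Because evacuation requires Slow itself to reach the exit, the total time is at least $t_0 + \sin(a/2)$, and letting $\epsilon\to 0$ gives the lower bound $\frac{1+a}{s} + \sin(a/2)$ for every admissible $a$. I may therefore maximize over $a \in [\pi, s-1]$.

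The final step is the constrained optimization of $h(a) = \frac{1+a}{s} + \sin(a/2)$. Differentiating, $h'(a) = \frac{1}{s} + \frac12\cos(a/2)$ vanishes precisely at $a^\star = 2\arccos(-2/s)$, and $h''(a) = -\frac14\sin(a/2) < 0$ on the relevant range $[\pi,2\pi)$, so $h$ is strictly concave with a unique interior maximizer $a^\star$; moreover $a^\star \ge \pi$ always holds. Everything then reduces to comparing $a^\star$ with the right endpoint $s-1$, and the threshold $c_{4.97}$ is defined by $2\arccos(-2/s) = s-1$. For $s \ge c_{4.97}$ one has $a^\star \le s-1$, the maximizer is interior, and $h(a^\star) = \frac{1+2\arccos(-2/s)}{s} + \sqrt{1 - 4/s^2}$ (using $\sin(\arccos(-2/s)) = \sqrt{1-4/s^2}$), which is the second bound. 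For $\pi+1 < s < c_{4.97}$ one has $a^\star > s-1$, so $h$ is increasing on all of $[\pi, s-1]$ and the maximum is attained at the endpoint $a = s-1$, where $h(s-1) = 1 + \sin\!\big(\tfrac{s-1}{2}\big)$, which is the first bound.

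I expect the only genuinely delicate point to be keeping the bookkeeping honest: that the window $\pi \le a \le s-1$ is simultaneously nonempty, compatible with the hypotheses of Lemma~\ref{5}, and correctly capped at $a=s-1$ by the constraint that Slow has not yet reached the boundary. The rest is a direct transplant of the \autoref{thm:fes_lb} machinery. The pleasant structural reason the two regimes glue continuously at $c_{4.97}$ is that the interior critical point $a^\star = 2\arccos(-2/s)$ reproduces exactly the Half-Chord expression, so the $BES$ lower bound coincides there with the $FES$ bound of \autoref{thm:fes_lb}.
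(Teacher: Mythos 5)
Your proposal is correct and follows essentially the same route as the paper's proof: treat the situation before Slow reaches the boundary as a fast-explores subcase, apply Lemma~\ref{5} to get the bound $\frac{1+a}{s} + \sin\left(\frac{a}{2}\right)$, and maximize over $a$ capped at $s-1$, with the two bullets corresponding to whether the interior critical point $a^{\star} = 2\arccos\left(-\frac{2}{s}\right)$ falls inside or outside the admissible range (the threshold $c_{4.97}$ solving $2\arccos\left(-\frac{2}{s}\right) = s-1$). You in fact supply the optimization details (concavity, endpoint-versus-interior comparison) that the paper's one-line proof leaves implicit; the only cosmetic slip is that your range should be capped at $\min\{s-1, 2\pi\}$ as in the paper, though this does not affect the conclusion since the maximizing $a$ is admissible in both regimes.
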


\begin{proof}
 One need only notice that, for $a = s-1 > \pi$, at time  $\frac{1 + a -\epsilon}{s}$, a $2\pi - a + \epsilon$ part of the boundary is yet unexplored, where  $2\pi - a \le \pi$.
 Moreover, Slow has not reached the boundary yet. Hence, we can view this as a fast-explores subcase.
 Then, we can compute $\max_{a \in [\pi, \min\{s-1, 2\pi\}]} \left\{\frac{1+a}{s} + \sin\left(\frac{a}{2}\right)\right\}$.
 Due to the upper bound change for $a$, the analysis provides a $1 + \sin\left(\frac{s-1}{2}\right)$ lower bound for $s \in (\pi+1, c_{4.97}]$ and
 the already visited $\frac{1+2\arccos\left(-\frac{2}{s}\right)}{s} + \sqrt{1 - \frac{4}{s^2}}$ for $s \ge c_{4.97}$.
\end{proof}
% 
% The following lemma encompasses the above $BES$ lower bounds by taking the maximum for each value of $s$.
% 
% \begin{lemma} \label{thm:bes_lb}
%  Any $BES$-strategy takes at least
%  \begin{itemize}
%   \item $1 + \frac{2}{s}\sqrt{1-\frac{s^2}{(s+1)^2}} + \frac{-s+2\arccos\left(-\frac{s}{s+1}\right) + 1}{s+1}$ time for $s \in [1, 2)$,
%   \item $1 + \sqrt{1 - \frac{4}{(s+1)^2}} + \frac{-s + 2\arccos\left(-\frac{2}{s+1}\right)+1}{s+1}$ for $s \in [2, c_{4.84}]$,
%   \item $1 + \sin\left(\frac{s-1}{2}\right)$ time for $s \in (c_{4.84}, c_{4.97})$ and
%   \item $\frac{1+2\arccos(-2/s)}{s} + \sqrt{1-\frac{4}{s^2}}$ time for $s \in [c_{4.97}, \infty)$.
%  \end{itemize}
% \end{lemma}

\subsection{An Improvement for Both Explore}
We now obtain numerical values for a stronger $BES$ lower bound by performing a more complex analysis on the \emph{Original $BES$} lower bound proof given in Lemma \ref{thm:bes}.
The main idea behind the improvement is to provide a better bound for case III of the proof, i.e. when the adversary places the exit on the farthest endpoint from Slow's current position.
Apparently, the best play for Slow is to lie exactly on the midpoint of the chord with the unexplored endpoints.
Nevertheless, in order for Slow to be there, it needs to spend some of its time, originally destined for exploration, within the disk interior.
We hereby examine the best possible scenario for Slow in terms of its distance from the midpoint following the above reasoning.
Let us refer to this lower bound as \emph{Improved} $BES$.

\begin{lemma}\label{thm:besNEW}
 Improved $BES$ is greater or equal to Original $BES$ for any $s \ge 1$.
\end{lemma}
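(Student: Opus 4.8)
The plan is to show that the Improved $BES$ lower bound dominates the Original one by establishing a pointwise comparison of the two underlying worst-case functions over the common parameter $y$. Recall that in Lemma~\ref{thm:bes}, case III yields the evacuation time $1 + y + \sin\!\left(\frac{s+(s+1)y-1}{2}\right)$, under the (implicitly pessimistic) assumption that Slow has been exploring the boundary right up until the exit is revealed, and hence must travel from an endpoint of the unexplored chord all the way to its midpoint, a distance of at least half the chord. The Improved bound arises from the observation that Slow can instead position itself closer to (or exactly at) the midpoint \emph{in advance}, trading some exploration progress for a shorter final dash. I would first set up precisely the optimization that defines Improved $BES$: for each admissible $y$, let $\delta \geq 0$ denote the amount of exploration time Slow sacrifices to move inward toward the midpoint, and express both the resulting reduction in the explored arc (which \emph{lengthens} the guaranteed unexplored chord, via Lemma~\ref{5}) and the resulting reduction in Slow's final beeline.

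The core of the argument is then a feasibility-and-monotonicity comparison. For any fixed $y$ that is optimal (or feasible) for Original $BES$, I would exhibit a corresponding feasible configuration for Improved $BES$ that attains an evacuation time at least as large. The cleanest way to do this is to note that the Improved analysis \emph{contains} the Original one as the special case $\delta = 0$: if Slow expends zero time repositioning, then the Improved scenario reduces exactly to case III of Lemma~\ref{thm:bes}. Since Improved $BES$ is defined as a maximum over all adversary placements and all Slow responses (with the adversary, not Slow, moving last and optimally), the value of the maximization can only increase when we enlarge the feasible set, or equivalently when we give the adversary a richer family of exit placements to exploit against Slow's committed trajectory. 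Thus the key step is to argue
\begin{center}
$\text{Improved } BES \;=\; \displaystyle\max_{y,\,\delta}\,\{\,\cdots\,\} \;\geq\; \max_{y}\,\{\,\cdots\,\}\big|_{\delta=0} \;=\; \text{Original } BES,$
\end{center}
i.e.\ that the Original bound is recovered by the $\delta = 0$ slice of the Improved optimization, so taking a maximum over the extra degree of freedom cannot decrease the value.

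The step I expect to be the main obstacle is verifying that the Improved construction is genuinely a \emph{valid} $BES$ lower bound, i.e.\ that every configuration counted in the Improved maximization really is forced upon \emph{some} legitimate both-explore strategy by an adversary. One must be careful that repositioning Slow toward the midpoint does not inadvertently help the robots — the adversary must still be able to place the exit so that the claimed distance is unavoidable regardless of which robot is nearer. Concretely, I would need to re-examine the case split (Slow-first vs.\ Fast-first reaching an endpoint) after Slow has moved inward, and confirm that the perpendicular-bisector argument from Lemma~\ref{thm:bes} still guarantees that the adversary can force the worse of the two sides. Once this validity is secured, the inequality itself is immediate from the inclusion of the $\delta=0$ case, and no delicate calculus is required; the substantive work is entirely in checking that the enlarged optimization remains a sound lower bound rather than an overclaim. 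I would therefore devote the bulk of the proof to this feasibility check and relegate the final comparison to a one-line appeal to monotonicity of the maximum.
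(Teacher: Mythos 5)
Your overall plan (set up the Improved optimization with a repositioning time $\delta$, then compare it slice-by-slice with the Original bound) points in a sensible direction, but the step you dismiss as ``immediate'' is exactly where the argument breaks. The extra degree of freedom $\delta$ (the paper's $k$) is \emph{not} an adversary variable: it is part of Slow's strategy, so in the Improved optimization it enters as an \emph{inner minimum}, not as part of the outer maximum. The paper's expression has the form
\[
\max_{y}\left\{1+y+\max\left\{\min_{k\in[0,y]}\tfrac{2}{s}\sin\!\left(\tfrac{s-1+(s+1)y-k}{2}\right),\;\min_{k\in[0,y]}\sqrt{\sin^2\!\left(\tfrac{s-1+(s+1)y-k}{2}\right)+\max\{1-\lambda-k,0\}^2}\right\}\right\},
\]
where $\lambda$ is the distance from the chord midpoint to the center. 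A lower bound must hold for \emph{every} strategy, hence for the strategy that chooses $k$ best; enlarging the set over which a minimum is taken can only \emph{decrease} the value, so ``taking a maximum over the extra degree of freedom cannot decrease the value'' is simply not available --- the naive monotonicity argument points the wrong way. Compounding this, your premise that the $\delta=0$ slice ``reduces exactly to case III of Lemma~\ref{thm:bes}'' is false: at $k=0$ Slow is on the boundary, so the Improved case-III distance is $\sqrt{\sin^2(\cdot)+(1-\lambda)^2}$, strictly larger than the Original's half-chord $\sin(\cdot)$ whenever $\lambda<1$. That discrepancy is the entire source of the improvement: the Original bound generously pretended Slow could sit at the chord midpoint for free, while the Improved bound charges Slow $k$ units of exploration time to get near it.

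A correct analytic comparison therefore needs a pointwise-in-$k$ argument rather than a slice argument: (i) on the feasible region the case-II term $\frac{2}{s}\sin\!\left(\frac{s-1+(s+1)y-k}{2}\right)$ is nondecreasing in $k$ (the sine's argument stays in $[\pi/2,\pi]$, where $\sin$ is decreasing, and the argument decreases with $k$), so Slow's minimum sits at $k=0$, where it equals the Original case-II term; (ii) for case III, for \emph{every} feasible $k$ the Improved distance is at least the half-chord $\sin\!\left(\frac{s-1+(s+1)y-k}{2}\right)$ determined by $k$, which in turn is at least the half-chord at $k=0$, i.e.\ the Original case-III term --- here one must actually use the geometric fact you mention only in passing, that spending $k$ in the interior lengthens the unexplored arc and hence the chord. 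With (i) and (ii) every inner minimum dominates the corresponding Original term, and the outer maxima compare termwise (after checking the coupled feasibility constraint $2\pi-s+1-(s+1)y+k\le\pi$, which only shrinks the set of admissible $k$ and so helps the inequality). Note finally that the paper itself does not give an analytic proof at all: it evaluates both optimizations numerically in Matlab over a grid of $y$, $k$, and $s$, and reports the comparison as an experimental finding (observing that the minimizer is always $k=0$). So an argument along the corrected lines above would be a genuinely different, and stronger, route than the paper's --- but as written, your proof does not close.
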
     

\begin{proof}
 %At time $1$, Fast has explored at most $s-1$ distance on the boundary, since it needs $\frac{1}{s}$ time to reach the boundary and in the remaining $\frac{s-1}{s}$ time it can traverse $s \frac{s-1}{s} = s-1$ distance.
 At time $1+y$, where $y \ge 0$ is a variable, Fast has explored at most an $s-1 + sy$ part of the boundary and Slow has explored at most a $y$ part of the boundary.
 Now, in extension to the previous lower bound, suppose that Slow has spent $k$ time, where $k \in [0, y]$, \emph{not exploring} the boundary, i.e. moving within the disk interior.
 % not that he reaches boundary at least one, otherwise bla bla bla
 
 Notice that it takes $1 + \frac{2\pi - s + 1}{s+1}$ time for the whole perimeter to be explored, when both robots are exploring after timestep $1$ (a subcase of $BSP$ bounds).
 Thence, we upper-bound $y \le \frac{2\pi - s + 1}{s+1}$.
 To lower-bound $y$, we restrict the unexplored part $u = 2\pi - s + 1 - (s+1)y + k \le \pi$.
 That is, we get $y \ge \max\{\frac{\pi-s+1+k}{s+1}, 0\}$.
 Moreover, $u > 0$ is already covered by the aforementioned upper bound.
 
 Now, we are ready to apply Lemma \ref{5}:
 There exist two unexplored points (say $A, B$) with arc distance $\ge 2\pi - s + 1 - (s+1)y + k$, which implies that the chord between them
 has length at least $2\sin\left(\frac{2\pi - s + 1 - (s+1)y + k}{2}\right) = 2\sin\left(\frac{s - 1 + (s+1)y - k}{2}\right)$.
 An adversary could place the exit on any of the two endpoints.
 If Slow reaches an endpoint first (case I), then the exit is placed on the other side, such that Slow has to traverse the chord.
 If Fast reaches an endpoint first, then the exit is placed either on the other side (case II), meaning that Fast has to traverse the chord,
 or on the endpoint that lies the farthest from Slow's current position (case III), meaning that Slow has to traverse at least half the chord.
 We assume that both the robots and the adversary behave optimally. Hence, the robots will always avoid case I.
 
 Let us now examine more carefully what happens in case III.
 For a depiction of the proof, see \autoref{fig:newLB}.
 The ideal location for Slow is to lie exactly on the chord midpoint, say $M$.
 Nevertheless, this may not be possible due to it only spending $k$ time within the disk interior.
 Let us consider the minimum distance from the chord midpoint to the boundary.
 This is exactly $1 - \lambda$, where $\lambda = |OM|$ is the distance from the midpoint to the center of the disk.
 Notice that $OM$ intesects $AB$ \emph{perpendicularly}, since $M$ is the midpoint of chord $AB$.
 Using the Pythagorean theorem in $\triangle AMO$, we get $\lambda = \sqrt{1 - \sin^2\left(\frac{s - 1 + (s+1)y - k}{2}\right)} = \left|\cos\left(\frac{s - 1 + (s+1)y - k}{2}\right)\right|$.
 If we consider the case when $1 - \lambda > k$, then the ideal position for Slow is to lie $k$ distance away from the boundary and on the extension of $OM$ (i.e. on point $K$).
 From there, Slow can take a beeline to the exit yielding a $\sqrt{\sin^2\left(\frac{s - 1 + (s+1)y - k}{2}\right) + (1 - \lambda - k)^2}$ distance again by the Pythagorean theorem, now in $\triangle AMK$.
 
 To conclude, Slow will try to minimize this beeline distance over $k$, while the adversary will select a case between II and III that maximizes the total distance.
 Overall, the optimization problem reduces to computing:
  
  \begin{equation}
  \max\limits_{y \in \left[y_{min}, y_{max}\right)} \left\{ 1 + y + \max
  \left\{
  \begin{array}{l}
   \min\limits_{k \in [0, y]}\frac{2}{s}\sin\left(\frac{s -1+ (s+1)y - k}{2}\right), \\
   \min\limits_{k \in [0, y]} \sqrt{\sin^2\left(\frac{s - 1 + (s+1)y - k}{2}\right) + \max\left\{1 - \lambda - k, 0\right\}^2} \\
  \end{array}
  \right\} \right\}
  \end{equation}
 
 Note that the above bound matches the original one for the cases where $1 - \lambda < k$.
 
 Last but not least, we need also consider the case where the adversary chooses to place the exit on the last boundary point to be explored.
 In the current setting, it takes at least $\frac{u}{s+1} = \frac{2\pi - s + 1 - (s+1)y + k}{s+1}$ extra time for both robots to explore the rest of the boundary,
 since Fast explores $s\frac{u}{s+1}$ while Slow explores $\frac{u}{s+1}$ for a total distance of $u$.
 Overall, we are looking to compute:
 
 \begin{equation}
  \max\limits_{y \in \left[y_{min}, y_{max}\right)} \min\limits_{k \in [0, y]} \left\{1+y+\frac{2\pi - s + 1 - (s+1)y + k}{s+1} \right\} = 
  \max\limits_{y \in \left[y_{min}, y_{max}\right)}  \left\{1+y+\frac{2\pi - s + 1 - (s+1)y}{s+1} \right\}
 \end{equation}

 Due to the inherent complexity of the optimization problem in (1), we compute \emph{numerical} bounds.
 The expressions (1) and (2) are computed and the maximum of them is chosen as the best-play scenario for an adversary.
 The computational work is done in Matlab. 
 We iterate over feasible values of variables $y$ and $k$ with a step of $10^{-3}$.
 For Fast's speed $s$, we iterate with a step of $10^{-1}$.
 The resulting experimental bounds show that, for all $s \in [1, 2\pi+1)$, 
 this lower bound is greater or equal to the lower bound given in Lemma \ref{thm:bes}.
 \end{proof}

 The numerical results we get from the above proof demonstrate the two robots \emph{always} choose $k = 0$ as the value of their minimizer.
 That is, it appears that spending any of their ``exploration time'' off the boundary should not provide any assistance to the robots.
 
 \begin{figure}
  \begin{center}
 \includegraphics{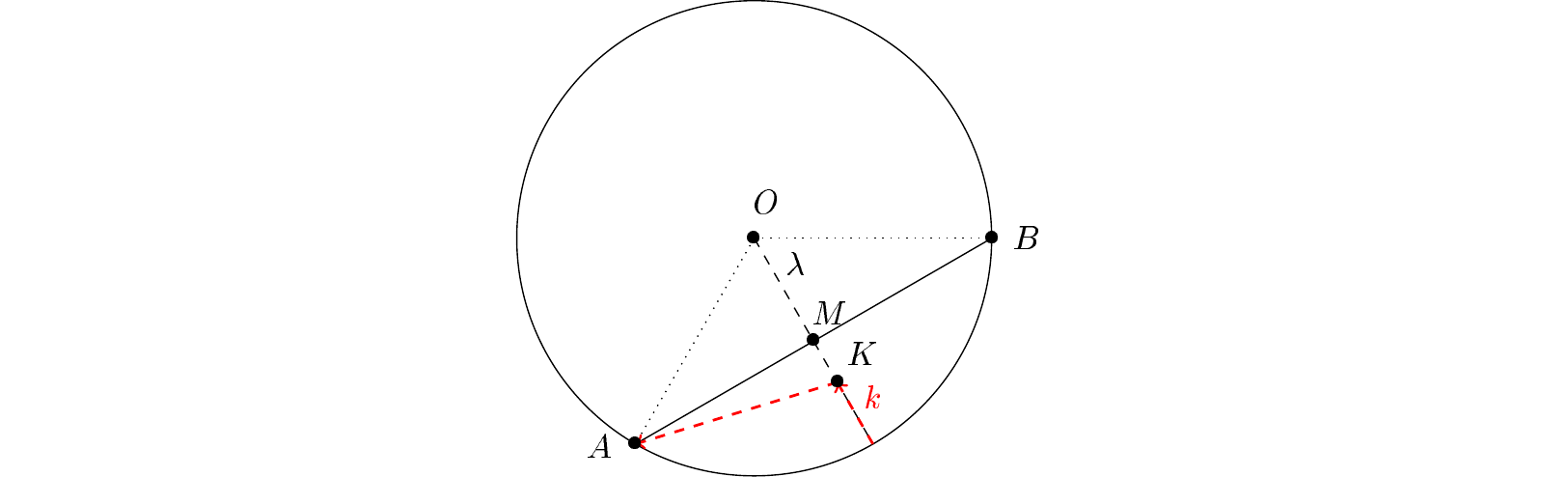}
 \end{center}
 \caption{An Improved Lower Bound for $BES$}
 \label{fig:newLB}
 \end{figure}

\section{Comparison}
\begin{comment}
We notice that for values of $s$ close to $1$, the $BES$ lower bound is much lower to the $FES$ one. 
On the other hand, the $FES$ lower bound is a more reasonable estimate and prevails for $s \ge 1.856$ showing that for that speed range a ``fast-explores'' approach would be a better option. 
\end{comment} 
 
\subsection{Lower Bounds} 

For each value of $s$ we select the minimum (weakest) lower bound between the (maximum) $BES$ and $FES$ ones as our overall lower bound; see \autoref{fig:lbCOMP}.
We see that Improved $BES$ is stronger than Original $BES$ one for any $s \ge c_{1.71} \approx 1.71$.
Moreover, Improved $BES$ is stronger than the $FES$ lower bound for $s \ge c_{2.75} \approx 2.75$.

\begin{figure}[H]
 \begin{center}
 \includegraphics{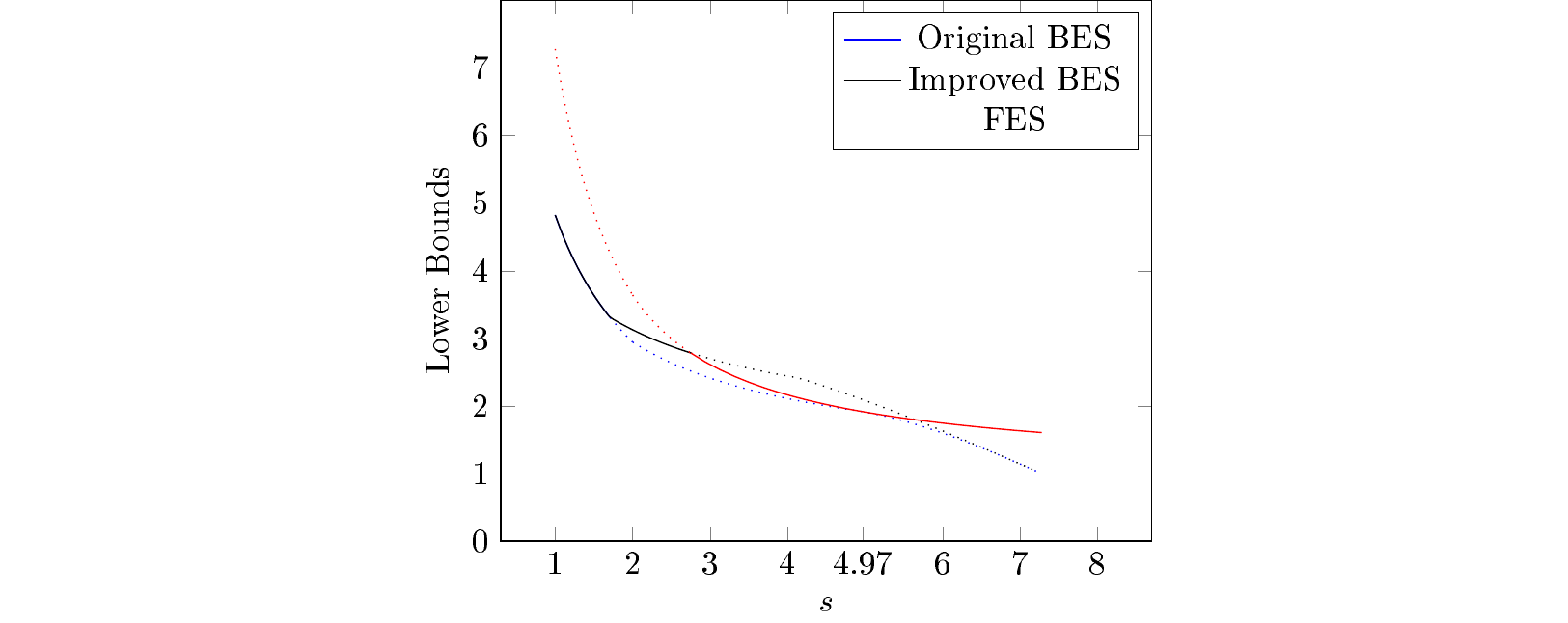}
 \end{center}
 \caption{Comparison of lower bounds}
 \label{fig:lbCOMP}
\end{figure}

% \begin{theorem} \label{thm:lb}
%  Any evacuation strategy takes at least 
%  \begin{itemize}
%   \item $1 + \frac{2}{s}\sqrt{1-\frac{s^2}{(s+1)^2}} + \frac{-s+2\arccos\left(-\frac{s}{s+1}\right) + 1}{s+1}$ time for $s \in [1, 2)$,
%   \item $1 + \sqrt{1 - \frac{4}{(s+1)^2}} + \frac{-s + 2\arccos\left(-\frac{2}{s+1}\right)+1}{s+1}$ for $s \in [2, c_{4.84}]$,
%   \item $1 + \sin\left(\frac{s-1}{2}\right)$ time for $s \in (c_{4.84}, c_{4.97})$ and
%   \item $\frac{1+2\arccos(-2/s)}{s} + \sqrt{1-\frac{4}{s^2}}$ time for $s \in [c_{4.97}, \infty)$.
%  \end{itemize}
% \end{theorem} 

\subsection{Upper Bounds}
We notice that Half-Chord outperforms BSP for any $s \ge c_{1.86} \approx 1.856$.
Besides, Fast-Chord outperforms BSP for any $s \ge c_{1.71} \approx 1.71$.
Finally, Fast-Chord outperforms Half-Chord for any $s \le c_{2.07} \approx 2.072$.
That is, the introduction of Fast-Chord yields a better upper bound for any $s \in [c_{1.71}, c_{2.07}]$.

\begin{figure}[H]
 \begin{center}
 \includegraphics{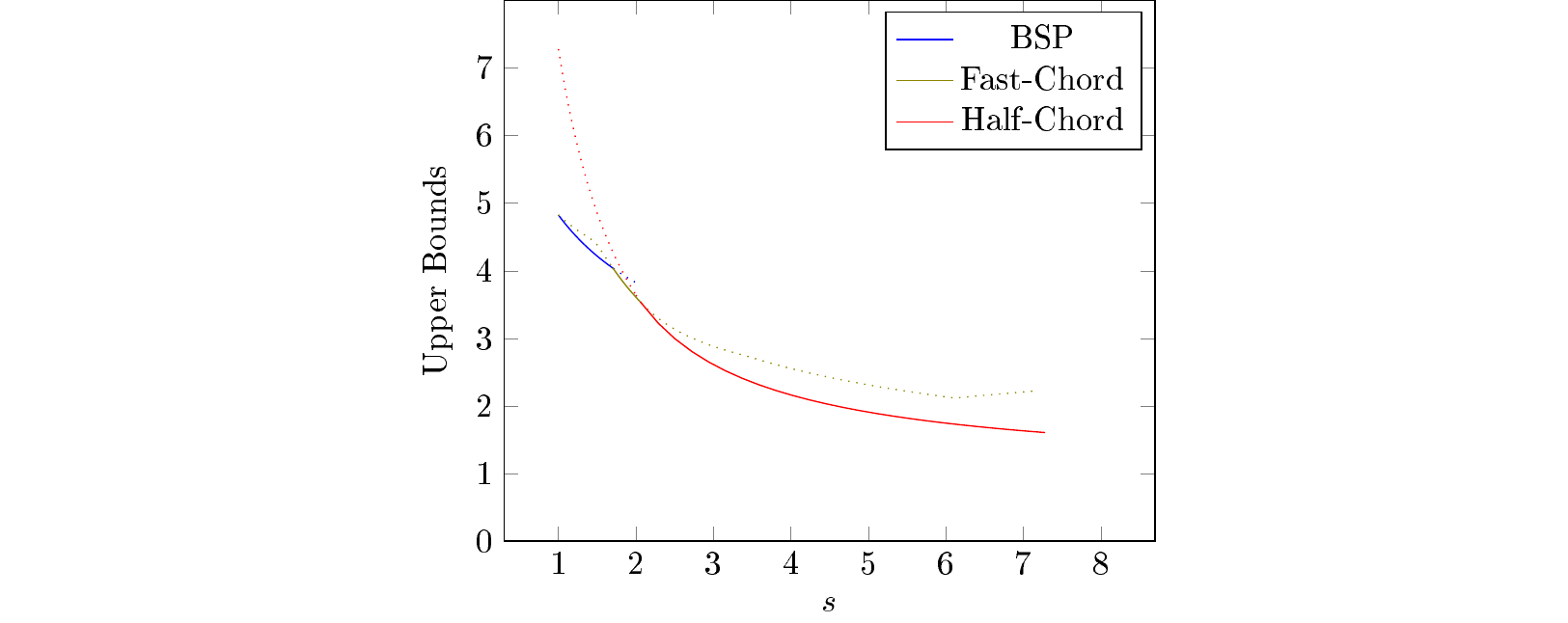}
 \end{center}
 \caption{Comparison of upper bounds}
 \label{fig:upper}
\end{figure}

\subsection{Comparison of Bounds}
 
By comparing upper and lower bounds, we see that Half-Chord is optimal for $s \geq c_{2.75}$, since the matching $FES$ lower bound is the weakest in this interval.
On the other hand, for $s < c_{2.75}$ the ratio between the bounds is at most $1.22$ (maximized when $s = c_{1.71}$), where
the strategy changes from BSP to Fast-Chord.
The best strategy to use is $BSP$ when $s < c_{1.71}$, Fast-Chord when $c_{1.71} < s < c_{2.07}$ and Half-Chord for $s \ge c_{2.07}$. 

\begin{figure}[H]
 \begin{center}
 \includegraphics{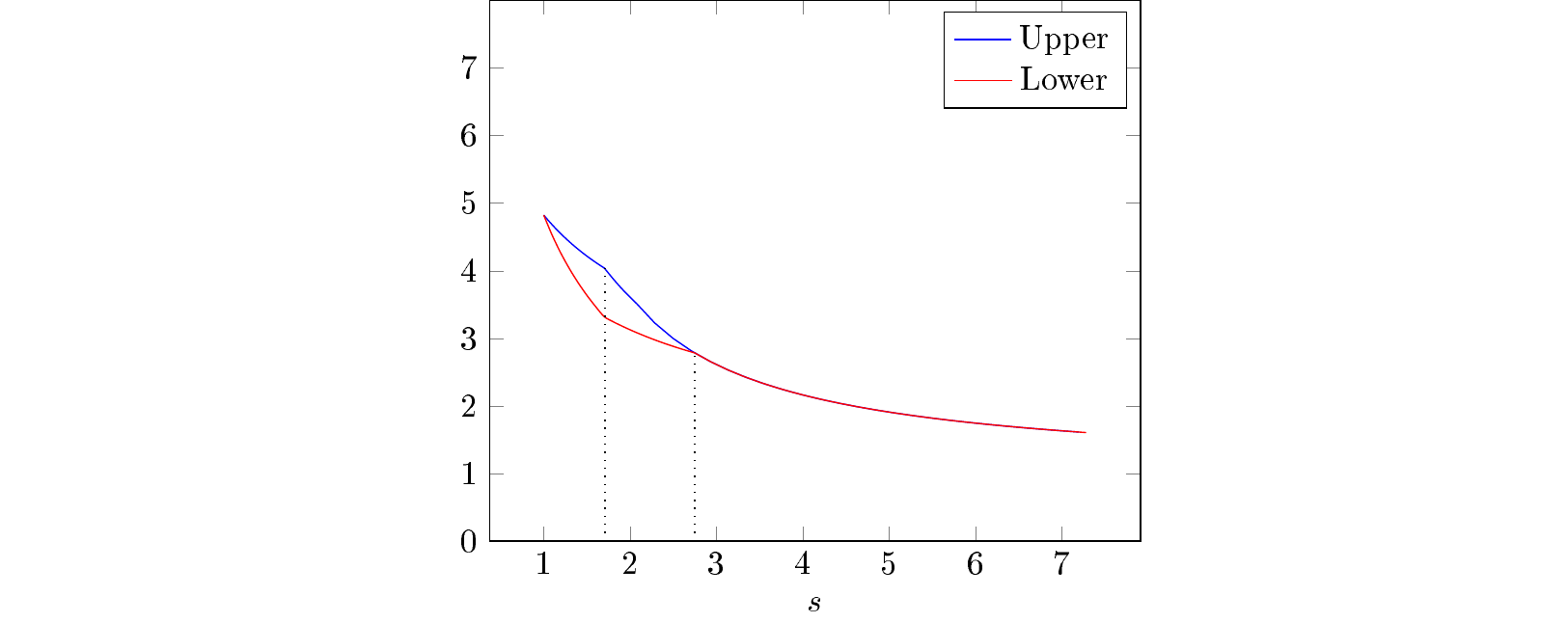}
 \end{center}
 \caption{Dominant Lower vs Upper Bounds}
 \label{fig:COMP}
\end{figure}

\section{Open Questions}
Optimality for the case $1 < s < c_{2.75}$ remains open.
Regarding further work on this topic, one could consider extending these results to a more-than-two-robots evacuation scenario.
Moreover, the non-wireless case for two-robots fast evacuation seems to be a quite challenging open problem 
given the fact that exact optimality appears to be complex to obtain even for $s = 1$ (\cite{face-to-face}).

\newpage


\begin{thebibliography}{99} 

\small

\bibitem{alpern02b}
S.~Alpern and S.~Gal.
{\em The Theory of Search Games and Rendezvous.}
Int.\ Series in Operations Research and Management Science,
number 55, Kluwer Academic Publishers, 2002.
 
\bibitem{BCR88} 
R.A.~Baeza-Yates, J.C.~Culberson, and G.J.E.~Rawlins.
Searching with uncertainty. 
{\em Proc.\ SWAT'88: 1st Scandinavian workshop on algorithm theory},
{\bf 318}, pp.~176--189.  

\bibitem{BCR93}
R.A.~Baeza-Yates, J.C.~Culberson, and G.J.E.~Rawlins.
Searching in the plane.
{\em Inform.\ and Comput.} {\bf 106} (1993), pp.\ 234--252.

\bibitem{BS95}
R.A.~Baeza-Yates and R.~Schott.
Parallel searching in the plane.  
{\em Computational Geometric Theory and Applications},
{\bf 5} (1995), pp.~143--154.

\bibitem{B64}
A.~Beck. 
On the linear search problem. 
{\em Naval Res.\ Logist.} {\bf 2} (1964), pp.\ 221--228.

\bibitem{B56}
R.~Bellman.
Minimization problem.  
{\em Bull.\ AMS} {\bf 62} (1956), p.\ 270.
 
\bibitem{B63}
R.~Bellman. 
An optimal search problem. 
{\em SIAM Rev.} {\bf 5} (1963), p.\ 274.

% \bibitem{Bender}
% M.A.~Bender, A.~Fern\'{a}ndez, D.~Ron, A.~Sahai, and S.P.~Vadhan.
% The power of a pebble: Exploring and mapping directed graphs.  
% {\em STOC 1998}, pp.~269--278.

\bibitem{Bonato}
A.~Bonato and R.J.~Nowakowski.
{\em The Game of Cops and Robbers on Graphs.}
American Mathematical Society, 2011.

\bibitem{BDD}
P.~Bose, J.-L.~De Carufel, and S.~Durocher.
Revisting the problem of searching a line.  
{\em ESA 2013}, pp.~205--216.

\bibitem{YC}
Y.~Chevaleyre.
Theoretical analysis of the multi-agent patrolling problem.
{\em IAT 2004}, pp.\ 302--308.

\bibitem{CGGM}
M.~Chrobak, L.~G\k{a}sieniec, T.~Gorry, and R.~Martin.
Group search on the line.  
{\em SOFSEM 2015}, pp.~164--176. 

\bibitem{s1}
J.~Czyzowicz, L.~G\k{a}sieniec, T.~Gorry, E.~Kranakis, R.~Martin, and D.~Paj\k{a}k, 
Evacuating robots from an unknown exit located on the boundary of a disc,
\emph{DISC 2014}, LNCS 8784, pp. 122-136, Springer, 2014.

\bibitem{czyz2011}
J.~Czyzowicz, L.~G\k{a}sieniec, A.~Kosowski, and E.~Kranakis.
Bondary patrolling by mobile agents with distinct maximal speeds.
{\em ESA 2011}, pp.~701--712.

\bibitem{face-to-face}
J.~Czyzowicz, K.~Georgiou, E.~Kranakis, L.~Narayanan, J.~Opatrny, and B.~Vogtenhuber,  
Evacuating robots from a disk using face-to-face communication, \emph{CIAC 2015}, 
pp.\ 140--152.

\bibitem{Gluss}
B.~Gluss.  
An alternative solution to the ``lost at sea'' problem.  
{\em Naval Research Logistics Quarterly}, {\bf 8} (1961), pp.~117--122.

\bibitem{JL09} 
A.~Je\.{z}, and J.~\L{}opuza\'{n}ski.
On the two-dimensional cow search problem.
{\em Information Processing Letters} {\bf 131} (2009), pp.~543--547.

\bibitem{KRT}
M.-Y.~Kao, J.H.~Reif, and S.R.~Tate.
Searching in an unknown environment: An optimal randomized 
algorithm for the cow-path problem.
{\em Inform.\ and Comput.} {\bf 131} (1996), pp.\ 63--80.  

\bibitem{KDM}
E.~Kranakis, D.~Krizanc, and E.~Markou.
{\em The Mobile Agent Rendezvous Problem in the Ring.} 
Morgan \& Claypool Publishers, 2010.  

\bibitem{KKR}
E.~Kranakis, D.~Krizanc, and S.~Rajsbaum.
Mobile agent rendezvous: A survey.
{\em SIROCCO 2006}, pp.\ 1--9.    

\bibitem{LC09}
H.~Li and K P.~Chong.
Search on lines and graphs.
{\em Proc.\ 48th IEEE Conference on Decision and Control, 
2009 held jointly with the 2009 28th 
Chinese Control Conference. (CDC/CCC 2009)} {\bf 109(11)}, 
pp.~5780--5785.

\bibitem{Parsons}
T.D.~Parsons.
Pursuit-evasion in a graph, in 
{\em Theory and Applications of Graphs: Proc., Michigan May 11--15}, pp.\ 426--441, 1976. 

\bibitem{TF10} T.~Temple and E.~Frazzoli.
Whittle-indexability of the cow path problem.
{\em American Control Conference (ACC) 2010}, 
pp.~4152--4158.

\bibitem{YWB}
V.~Yanovski, I.A~Wagner, and A.M.~Bruckstein.
A distributed ant algorithm for efficiently patrolling a network.
{\em Algorithmica} {\bf 37} (2003), pp.\ 165--186. 

\end{thebibliography}
\end{document}